\title{Synthesizing safe coalition strategies}
\titlerunning{Synthesizing safe coalition strategies}
\author{Nathalie Bertrand}{Univ. Rennes, Inria, CNRS, IRISA -- Rennes (France)}{}{http://orcid.org/0000-0002-9957-5394}{}
\author{Patricia Bouyer}{Universit\'e Paris-Saclay, ENS Paris-Saclay,
  CNRS, LSV, Gif-sur-Yvette (France)}{}{http://orcid.org/0000-0002-2823-0911}{}
\author{Anirban Majumdar}{Univ. Rennes, Inria, CNRS, IRISA -- Rennes
  (France) \and Universit\'e Paris-Saclay, ENS Paris-Saclay, CNRS, LSV, Gif-sur-Yvette (France)}{}{}{}
\authorrunning{N. Bertrand, P. Bouyer and A. Majumdar}
\keywords{concurrent games; parameterized verification; strategy synthesis}
\begin{document}
\hfuzz=2pt
\def\floatpagefraction{.9}
\def\textfraction{.1}
\def\topfraction{.9}

\maketitle

\begin{abstract}
  Concurrent games with a fixed number of agents have been thoroughly
  studied, with various solution concepts and objectives for the
  agents. In this paper, we consider concurrent games with an
  arbitrary number of agents, and study the problem of synthesizing a
  coalition strategy to achieve a global safety objective. The problem is
  non-trivial since the agents do not know \emph{a priori} how many
  they are when they start the game. We prove that the existence of a
  safe arbitrary-large coalition strategy for safety objectives is a
  \PSPACE-hard problem that can be decided in exponential space.
\end{abstract}

\section{Introduction}
\label{sec:intro}

\subparagraph*{Context.}
The generalisation and everyday usage of modern distributed systems
call both for the verification and synthesis of algorithms  or strategies
running on distributed systems. Concrete
examples are cloud computing, blockchain technologies, servers with
multiple clients, wireless sensor networks, bio-chemical systems, or
fleets of drones cooperating to achieve a common
goal~\cite{CQSS-ijcai19}.  In their general form, these systems are
not only distributed, but they may also involve an arbitrary number of
agents.  This explains the interest of the model-checking community
both for the verification of parameterized
systems~\cite{Esp14,Bloem-book}, and for the synthesis of distributed
strategies~\cite{PR90}.  Our contribution is at the crossroad of those
topics.

\subparagraph*{Parameterized verification.} Parameterized
verification refers here to the verification of systems formed of an
arbitrary number of agents.
Often, the precise number of agents is unknown, yet, algorithms and
protocols running on such distributed systems are designed to operate
correctly independently of the number of agents. The automated
verification and control of crowds, \emph{i.e.},  in case the agents
are anonymous, is challenging. Remarkably, subtle changes, such as the
presence or absence of a controller in the system, can drastically
alter the complexity of the verification problems~\cite{Esp14}. In the
decidable cases, the intuition that bugs appear for a small number of
agents is sometimes confirmed theoretically by the existence of a
cutoff property, which reduces the parameterized model checking to the
verification of finitely many instances~\cite{EK00}. In the last 15
years, parameterised verification algorithms were successfully applied
to \emph{e.g.},  cache coherence protocols in uniform memory access
multiprocessors~\cite{Del03}, or the core of simple reliable broadcast
protocols in asynchronous systems~\cite{KVW15}.  When agents have
unique identifiers, most verification problems become undecidable,
especially if one can use identifiers in the code agents
execute~\cite{AK-ipl86}.

To our knowledge, there are few works on controlling parameterized
systems. Exceptions are, control strategies for (probabilistic)
broadcast networks~\cite{BFS14} and for crowds of (probabilistic)
automata~\cite{BDGGG-lmcs19,MST-corr19,CFO-fossacs20}.

\subparagraph*{Distributed synthesis.}
The problem of distributed synthesis asks whether strategies for
individual agents can be designed to achieve a global objective, in a
context where individuals have only a partial knowledge of the
environment. There are several possible formalizations for distributed
synthesis, for instance via an architecture of processes with
communication links between agents~\cite{PR90}, or using coordination
games~\cite{PR79,MW03,BKP11}.  The two settings are linked, and many
(un)decidability results have been proven, depending on various
parameters.

\subparagraph*{Concurrent games on graphs.}
  By allowing 
complex interactions between agents,
concurrent games on graphs~\cite{AlfaroHK98,AHK02} are a model of
choice in several contexts, for instance for multi-agents systems, or
for coordination or planning problems.
An arena for $n$ agents is a directed graph where the transitions are
labeled by $n$-tuples of actions (or simply words of length $n$). At
each vertex of the graph, all $n$ agents select simultaneously and
independently an action, and the next vertex is determined by the
combined move consisting of all the actions (or word formed of all the
actions). Most often, one considers infinite duration plays,
\textit{i.e.},  plays generated by iterating this process
forever. Concepts studied on multiagent concurrent games include many
borrowed from game theory, such as winning strategies (see
\emph{e.g.},~\cite{AlfaroHK98}), rationality of the agents (see
\emph{e.g.},~\cite{FKL10}), Nash equilibria (see
\emph{e.g.},~\cite{UW11a,BBMU15}).

\subparagraph*{Parameterized concurrent games on graphs.}
In a
previous work, we introduced concurrent games in which the number of
agents is arbitrary~\cite{BBM19}. These games generalize concurrent
games with a fixed number of agents, and can be seen as a succinct
representation of infinitely many games, one for each fixed number of
agents.  This is done by replacing, on edges of the arena, words
representing the choice of each of the agents by languages of finite
yet \emph{a priori} unbounded words. Such a parameterized arena can
represent infinitely many interaction situations, one for each
possible number of agents. In parameterized concurrent games, the
agents do not know \emph{a priori} the number of agents participating
to the interaction.  Each agent observes the action it plays and the
vertices the play goes through. These pieces of information may refine
the knowledge each agent has on the number of involved agents.

Such a game model raises new interesting questions, since the agents
do not know beforehand how many they are. In~\cite{BBM19}, we first
considered the question of whether Agent~$1$ can ensure a reachability
objective independently of the number of her opponents, and no matter
how they play. The problem is non trivial since Agent~$1$ must win
with a \emph{uniform} strategy. We proved that when edges are labeled
with regular languages, the problem is \PSPACE-complete; and for
positive instances one can effectively compute a winning strategy in
polynomial space.

\subparagraph*{Contribution.}
In this paper, we are interested in the coordination problem
in concurrent parameterized games, with application to distributed
synthesis. Given a game arena and an objective, the problem consists
in synthesizing for every potential agent involved in the game a
strategy that she should apply, so that, collectively, a global
objective is satisfied. In our setting, it is implicit that agents
have identifiers. However agents do not communicate; their identifier
will only be used to select the vertices the game proceeds
to. Furthermore, agents do not know how many they are, they only see
vertices which are visited, and can infer information about the number
of agents involved in the game.

\begin{figure}[h]
\centering
\begin{tikzpicture}[shorten >=1pt,node distance=12mm and 1.8cm,on grid,auto,semithick]
    \everymath{\scriptstyle}
\node[state,inner sep=1pt,minimum size=5mm, fill=green!50] (v_0) {$v_0$};
\node[state,inner sep=1pt,minimum size=5mm, fill=green!50] (v_1) [above right = of v_0] {$v_1$};
\node[state,inner sep=1pt,minimum size=5mm, fill=green!50] (v_2) [below right = of v_0] {$v_2$};
\node[state,inner sep=1pt,minimum size=5mm, fill=green!50] (v_3) [below right = of v_1] {$v_3$};
\node[state,inner sep=1pt,minimum size=5mm] (top) [above right = of v_3] {$v_4$};
\node[state,inner sep=1pt,minimum size=5mm, fill=green!50] (bot) [below right = of v_3] {$v_5$};

 \path[every node/.style={sloped,anchor=south,auto=false},-latex']
(v_0) edge 	node [above] {$(\Sigma\Sigma)^+$} (v_1)
 (v_0) edge node [below] {$\Sigma(\Sigma\Sigma)^*$} (v_2)
(v_1) edge node [above] {$\Sigma^{+}$} (v_3)
(v_2) edge node [below] {$\Sigma^{+}$} (v_3)
(v_3) edge node [above] {$(bb)^+$} node [below] {$a(aa)^*$} (top)
(v_3) edge node [above] {$(aa)^+$} node [below] {$b(bb)^*$} (bot)
;
\path[every node/.style={anchor=south,auto=false},-latex']
(bot) edge[loop right] node [right] {$\Sigma^{+}$} (bot)
(top) edge[loop right] node [right] {$\Sigma^{+}$}(top)
;
\end{tikzpicture}
\caption{Example of a parameterized arena. All unspecified transitions
lead to vertex $v_4$.}
\label{fig:running-example}
\end{figure}
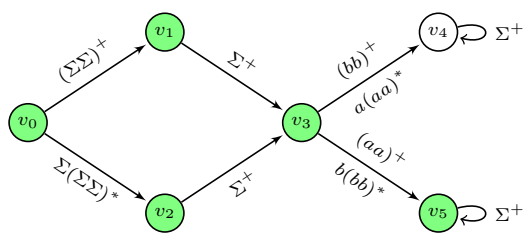

To better understand the model and the problem, consider the game
arena depicted on Fig.~\ref{fig:running-example}.  Edges are labeled
by (regular) languages.  Assuming the game starts at $v_0$, the game
proceeds as follows: a positive integer $k$ is selected by the
environment, but is not revealed to the agents; then an infinite word
$w \in \Sigma^\omega$ is selected collectively by the agents (this is
the \emph{coalition strategy}); the $n$-th letter of $w$ represents
the action played by Agent~$n$;\footnote{This is where identifiers are
implicitely used.}  depending on whether the prefix of length $k$ of
$w$ belongs to $(\Sigma \Sigma)^+$ (in case $k$ is even) or $\Sigma
(\Sigma \Sigma)^*$ (in case $k$ is odd), the game proceeds to vertex
$v_1$ or $v_2$; the process is repeated ad infinitum, generating an
infinite play in the graph. Depending on the winning condition, the
play will be winning or losing for $k$. The coalition strategy will be
said winning whenever the generated play is winning whatever the
selected number $k$ of agents is.

In this example, assuming the winning condition is to stay in the
green vertices, there is a simple winning strategy: play $a^\omega$ in
$v_0$, $v_1$ and $v_2$ (that is, all agents should play an $a$), and
if the game has gone through $v_1$ (case of an even number of agents),
then play $a^\omega$ in $v_3$ (all agents should play an $a$),
otherwise play $b^\omega$ in $v_3$ (all agents should play a
$b$). This ensures that the play never ends up in vertex $v_4$.

In this paper, we focus on safety winning conditions: the agents must
collectively ensure that only safe vertices are visited along any play
compatible with the coalition strategy in the game. We prove that the
existence of a winning coalition strategy is decidable in exponential
space, and that it is a \PSPACE-hard problem.  For positive instances,
winning coalition strategies with an exponential-size memory structure
can be synthesized in exponential space.


\section{Game setting}
\label{sec:setting}

We use $\pnats$ for the set of positive natural numbers. For an
alphabet $\Sigma$ and $k \in \pnats$, $\Sigma^k$ denotes the set of
all finite words of length $k$, $\Sigma^+$ denotes the set of all
finite but non-empty words, and $\Sigma^\omega$ denotes the set of all
infinite words. For two words $u \in \Sigma^+$ and
$w \in \Sigma^+ \cup \Sigma^\omega$, we write $u \prefix w$ to denote
$u$ is a prefix of $w$, and for any $k\in \pnats$, $\wpref{w}{k}$
denotes the prefix of length $k$ of $w$ (belongs to
$\Sigma^k$).

We introduced parameterized arenas in~\cite{BBM19}, a model of arenas
with a parameterized number of agents. Parameterized arenas extend
arenas for concurrent games with a fixed number of
agents~\cite{AlfaroHK98}, by labeling the edges with languages over
finite words, which may be of different lengths. Each word represents
a joint move of the agents, for instance
$u = a_1 \cdots a_k \in \Sigma^k$ assumes there are $k$ agents, and
for every $1 \leq n \leq k$, Agent~$n$ chooses action $a_n$.

\begin{definition}
  \label{def:param-arena}
  A \emph{parameterized arena} is a tuple
  $\arena = \tuple{V, \Sigma, \Delta}$ where
  \begin{itemize}
  \item $V$ is a finite set of vertices;
  \item $\Sigma$ is a finite set of actions;
  \item $\Delta : V \times V \to 2^{\Sigma^+}$ is a partial transition
    function.
  \end{itemize}
  It is required that for every $(v, v') \in V \times V$,
  $\Delta(v,v')$ describes a regular language.
\end{definition}

Fix a parameterized arena $\arena = \tuple{V, \Sigma, \Delta}$. 
The arena $\arena$ is \emph{deterministic} if for every $v \in V$, and
every word $u \in \Sigma^+$, there is at most one vertex $v' \in V$
such that $u\in \Delta(v,v')$. The arena is assumed to be
\emph{complete}: for every $v\in V$ and $u \in \Sigma^+$, there exists
$v'\in V$ such that $u \in \Delta(v,v')$. This assumption is
natural: such an arena will be used to play games with an
arbitrary number of agents, hence for the game to be non-blocking,
successor vertices should exist whatever that number is and
irrespective of the choices of actions. 

\begin{figure}[h]
\centering
\vspace*{-.5cm}
\begin{tikzpicture}[shorten >=1pt,node distance=12mm and 3cm,on grid,auto,semithick]
    \everymath{\scriptstyle}
\node[state,inner sep=1pt,minimum size=5mm,fill=green!50] (v_0) {$v_0$};
\node[state,inner sep=1pt,minimum size=5mm,fill=green!50] (v_1) [right = of v_0] {$v_1$};
\node[state,inner sep=1pt,minimum size=5mm,fill=green!50] (v_2) [below right = 2.2cm of v_0] {$v_2$};

 \path[every node/.style={anchor=south,auto=false},-latex']
(v_0) edge [bend left] node [above] {$a^*ba^*$} (v_1)
(v_0) edge [bend right] node [below] {$a$} (v_2)
(v_2) edge [bend right] node [right] {$\Sigma^+$} (v_1)
 (v_1) edge [bend left] node [below] {$b\lor aa^+$} (v_0)
  (v_0) edge[loop above,-latex'] node [above] {$a^*ba^*$} (v_0);
\end{tikzpicture}
\caption{Example of a non-deterministic parameterized arena. Only safe
vertices (colored in green) have been depicted here. All unspecified transitions
lead to a non-safe vertex $\bot$.}
\label{fig:example-arena}
\end{figure}
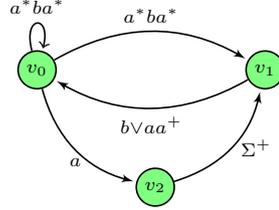

\begin{example}
  \label{ex:running-example}
  We already gave an example in the introduction. Let us give another
  example, which will be useful for illustrating the constructions
  made in the paper. Fig.~\ref{fig:example-arena} presents a
  non-deterministic parameterized arena. As such the arena is not
  complete, we assume that all unspecified moves lead to an extra
  losing vertex $\bot$, not depicted here.  If for some number of
  agents $k$ (selected by environment and not known to the agents),
  the $k$-length prefix of the word collectively chosen by the agents
  at $v_0$ belongs to $a^*ba^*$, then the play either stays at $v_0$
  or moves to $v_1$ (again selected by environment).
\end{example}

\subparagraph*{History, play and strategy.}
We fix a parameterized
arena $\arena = \tuple{V, \Sigma, \Delta}$.  A \emph{history} in
$\arena$ is a finite sequence of vertices, that is compatible with the
edges: formally, $h = v_0 v_1 \ldots v_p \in V^+$ such that for every
$1 \leq j <p$, $\Delta(v_j,v_{j{+}1})$ is defined.  We write $\Hist$
for the set of all histories.
An infinite sequence of vertices compatible with the edges is called a
\emph{play}.

A \emph{strategy} for Agent~$n$ is a mapping
$\sigma_n : \Hist \to \Sigma$ that associates an action to every
history.
A \emph{strategy profile} is a tuple of strategies, one for each
agent. Since the number of agents is not fixed \emph{a priori}, a
strategy profile is an infinite tuple of strategies:
$\stratprof = \langle \sigma_1, \sigma_2, \ldots \rangle$ = 
$(\Hist \to \Sigma)^\omega$.

\begin{table}[h]
\centering
\begin{tabular}{c||c|c|c|c|c}
\multicolumn{1}{c||}{} & $h_0$ & $h_1$& $h_2$ & $h_3$& $\ldots$\\[0.5ex] 
 \hline\hline
 $\sigma_1$ & $a$ & $b$ & $b$ & $b$ & $\ldots$\\ 
 \hline
 $\sigma_2$ & $b$ & $b$ &$b$ & $b$ & $\ldots$\\ 
 \hline
 $\sigma_3$ & $b$ & $a$ &$a$ & $a$ &$\ldots$ \\ 
 \hline
 $\vdots$&$\vdots$&$\vdots$&$\vdots$&$\vdots$
\end{tabular}
\caption{From strategy profile to coalition strategy.}
\label{tab:history}\label{link}
\end{table}

 Observe that a strategy profile can equivalently be described as a
 \emph{coalition strategy} $\stratsys : \Hist \to \Sigma^\omega$, as
 illustrated in Table~\ref{tab:history}. Indeed, if an enumeration of
 histories $(h_j)_{j\in \nats}$ is fixed, a strategy profile can be
 seen as a table with infinitely many rows --one for each agent-- and
 infinitely many columns indexed by histories. Reading the table
 vertically provides the coalition strategy view: each history is
 mapped to an $\omega$-word, obtained by concatenating the actions
 chosen by each of the agents.  Since, in this paper, we are
 interested in the existence of a winning strategy profile, it is
 equivalent to asking the existence of a winning coalition strategy
 (they may not be equivalent for some other decision problems).  In
 the sequel, we mostly take the coalition strategy view, but may
 interchangeably also consider strategy profiles.

\subparagraph*{Finite memory coalition strategies.}
Let $\stratsys : \Hist \to \Sigma^\omega$ be a coalition strategy and
$\M$ be a set. We say that the strategy $\stratsys$ \emph{uses memory
  $\M$} whenever there exist $\minit \in \M$ and applications
$\upd : \M \times V \to \M$ and
$\next : \M \times V \to \Sigma^\omega$ such that by defining
inductively $\m[h] \in \M$ by $\m[v_0] =\minit$ and
$\m[h \cdot v] = \upd(\m[h],v)$, we have that for every $h \in \Hist$,
$\stratsys(h) = \next(\m[h],\last(h))$, where $\last(h)$ is the last
vertex of history $h$. The structure $(\M,\upd)$ records information
on the history seen so far ($\m[h]$ is the memory state ``reached''
after history $h$), and $\next$ dictates how all the agents should
play.

If $\M$ is finite, then $\stratsys$ is said \emph{finite-memory}, and
if $\M$ is a singleton, then $\stratsys$ is said \emph{memoryless}
(each choice only depends on the last vertex of the history).

\subparagraph*{Realizability and outcomes.}
 For $k\in \pnats$, we say a history $h=v_0 \cdots v_p$ is
 \emph{$k$-realizable} if it corresponds to a history for $k$ agents,
 \emph{i.e.}, if for all $j < p$, there exists $u \in \Sigma^k$ with
 $u \in \Delta(v_j,v_{j{+}1})$.  A history is \emph{realizable} if it
 is $k$-realizable for some $k\in \pnats$.
Similarly to histories for finite sequences of consecutive
vertices, one can define the notions of
\emph{($k$-)realizable plays} for infinite sequences.

Given a coalition strategy $\stratsys$, an initial vertex $v_0$ and a
number of agents $k\in \pnats$, we define the $k$-\emph{outcome}
$\kout{v_0,\stratsys}$ as the set of all $k$-realizable plays 
induced by $\stratsys$ from $v_0$. Formally,
$ \kout{v_0,\stratsys} = \{v_0 v_1 \cdots \mid \ \forall j\in \pnats,\
\wpref{\stratsys(v_0 \cdots v_j)}{k} \in \Delta(v_j,v_{j{+}1}) \}$.
Note that the completeness assumption ensures that the set
$\kout{v,\stratsys}$ is not empty. Then the \emph{outcome} of
coalition strategy $\stratsys$ is simply
$\out{v_0,\stratsys}=\bigcup_{k\in \pnats} \kout{v_0,\stratsys}$.

\subparagraph*{The safety coalition problem.}
We are now in a position to define our problem of interest. Given an
arena $\mathcal{A} = \tuple{V, \Sigma, \Delta}$, a set of \emph{safe}
vertices $S \subseteq V$ defines a \emph{parameterized safety game}
$\mathcal{G} = (\mathcal{A},S)$. Without loss of generality we assume
from now that $V \setminus S$ are sinks.
A coalition strategy $\stratsys$ from
$v_0$ in the safety game $\mathcal{G} = (\mathcal{A},S)$ is said
\emph{winning} if all induced plays only visit vertices from $S$:
$\out{v,\stratsys} \subseteq S^\omega$. Our goal is to study the
decidability and complexity of the existence of winning coalition
strategies, and to synthesize such winning coalition strategies when
they exist. We therefore introduce the following decision problem:
\medskip
   
\noindent\fbox{\begin{minipage}{.99\linewidth}
    \textsc{Safety coalition problem} \\
    {\bf Input}: A parameterized safety game $\mathcal{G} =
    (\mathcal{A},S)$ and an initial vertex $v_0$.\\
    {\bf Question}: Does there exist a coalition strategy $\stratsys$
    such that $\out{v_0,\stratsys} \subseteq S^\omega$?
   \end{minipage}}
\smallskip

The safety coalition problem is a coordination problem: agents should
agree on a joint strategy which, when played in the graph and no
matter how many agents are involved, the resulting play is safe. Note
that, due to the link between coalition strategies and tuples of
individual strategies mentioned on Page~\pageref{link}, the coalition
strategies are distributed: the only information required for an agent
to play her strategy is the history so far, not the number of agents
selected by the environment; however she can infer some information
about the number of agents from the history; this is for instance the
case at vertex $v_3$ in the example of Fig.~\ref{fig:example-arena}.
Note that there is no direct communication between agents.

   \begin{example}
     \label{ex:1}
     We have already given in the introduction a winning coalition
     strategy for the game in Fig.~\ref{fig:running-example}.
     On the arena in Fig.~\ref{fig:example-arena},
     assuming $\bot$ is the only unsafe vertex, one can also show
     that the agents have a winning coalition strategy $\stratsys$
     from $v_0$ to stay within green (\emph{i.e.}, safe) vertices.
     Consider the coalition strategy $\stratsys$ such that
     $\stratsys(v_0) = aba^\omega$, $\stratsys(v_0 v_2) = a^\omega$,
     $\stratsys(v_0 v_1) = a^\omega$, and
     $\stratsys(v_0 v_2 v_1) = b^\omega$.
     Intuitively, on playing $aba^\omega$ from $v_0$,
     in one step, the game either
     stays in $v_0$ (which is `safe') or moves to $v_2$ (in case the
     number of agents $k = 1$) or to $v_1$ (in case $k \ge 2$); from
     $v_1$, depending on history, coalition plays either $b^\omega$
     (when the history is $v_0 v_2 v_1$ and hence $k = 1$) or
     $a^\omega$ (otherwise) which leads the game back to $v_0$ (note
     that at vertex $v_2$, choice of actions of the agents is not
     important, they can collectively play any $\omega$-word).
     However, one can show that there is no memoryless coalition
     winning strategy.  Indeed, the coalition strategy $a^\omega$ from
     $v_1$ is losing for $k = 1$, similarly $b^\omega$ from $v_1$ is
     losing for $k \ge 2$, and any other strategy is also losing. For
     instance, $ba^\omega$ from $v_0$ is losing because if the game
     moves to $v_1$, coalition has no information on the number of
     agents and hence any word from $v_1$ will be losing ($a^\omega$
     is losing for $k=1$, $b^\omega$ is losing for $k \ge 2$, and
     similarly for other words).
   \end{example}

The rest of the paper is devoted to the proof of the following
theorem:

\begin{theorem}
\label{theo:main}
  The safety coalition problem can be solved in exponential space, and
  is \PSPACE-hard. 
  For positive instances, one can synthesize a winning coalition
  strategy in exponential space which uses exponential memory; the
  exponential blowup in the size of the memory is tight.
\end{theorem}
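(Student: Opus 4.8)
The plan is to reduce the safety coalition problem to a finite two-player safety game played between the coalition and the environment, obtained by a knowledge-based (determinization-style) construction on the transition languages, and then to solve that finite game.

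First I would replace each language $\Delta(v,v')$ by a DFA recognizing it and form their synchronous product $A$, with state space $Q$ of exponential size. A run of $A$ on a word $w\in\Sigma^\omega$ records, at every length $k$, exactly which transitions $\wpref{w}{k}$ enables out of each vertex; thus choosing the coalition word at a history amounts to choosing a path in $A$, while the (unknown) number of agents $k$ selects a position along that path. The central observation is that, since the coalition only ever observes the sequence of visited vertices, after a history $h$ all agents share the same current vertex $\last(h)$ and differ only through the set $K_h\subseteq\pnats$ of numbers of agents that are still consistent with $h$. The memory of a coalition strategy therefore needs to track nothing more than $K_h$, suitably abstracted through $A$.

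The crux is a structural lemma making this abstraction finite. I would argue that the coalition may restrict, without loss of generality, to words whose induced run in $A$ is ultimately periodic, and may play so as to keep all still-consistent values of $k$ synchronized on the same states of $A$; consequently $K_h$ stays an ultimately periodic subset of $\pnats$ whose threshold and period are bounded by quantities derived from the cycle structure of $A$ (an exponential bound, essentially the lcm of the cycle lengths). This yields a finite knowledge game $\mathcal{G}'$ whose positions are pairs consisting of a current vertex and a bounded description of $K_h$ together with the $A$-states occupied by its members: the coalition proposes a periodic word, the environment answers by selecting a consistent $k$ and an enabled transition, and the updated knowledge is computed by the subset/residue bookkeeping. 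I would then prove soundness and completeness, namely that a winning coalition strategy from $v_0$ exists if and only if the coalition wins the safety game $\mathcal{G}'$ from the initial position with knowledge $K=\pnats$. Since $\mathcal{G}'$ is a finite safety game, it is solved by the standard attractor (greatest-fixpoint) computation, positional strategies suffice in $\mathcal{G}'$, and reading a positional strategy back gives a coalition strategy whose memory is the set of reachable knowledge positions, of exponential size. Bounding the size of $\mathcal{G}'$ and solving it on the fly gives the exponential-space algorithm together with the synthesis of an exponential-memory winning strategy.

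The hard part will be this structural lemma: proving that the exact, \emph{a priori} infinite knowledge $K_h$ can be faithfully replaced by a bounded, ultimately periodic abstraction without discarding any winning behaviour, and pinning down the exponential bound on the period — this is exactly where the exponential blowup of the memory originates. For the matching lower bounds, I would obtain \PSPACE-hardness by a reduction from a canonical \PSPACE-complete problem (for instance \textsc{Tqbf}, encoding the existential choices as the coalition's word and the universal choices as the environment's choice of $k$ and of transition, or alternatively from the intersection-nonemptiness of DFAs). For the tightness of the exponential memory I would give a dedicated family of arenas, based on a Chinese-remainder construction, in which membership in $\Delta$ tests divisibility of $k$ by the first $n$ primes (through languages such as $(\Sigma^{p})^{+}$): there the coalition is forced to remember $k$ modulo $\prod_{i\le n} p_i$ in order to play safely, and since this product is exponential in the size of the input, every winning coalition strategy must use exponentially many memory states.
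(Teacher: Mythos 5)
Your overall route is genuinely different from the paper's: you abstract the coalition's uncertainty into a knowledge set $K_h \subseteq \pnats$ and try to solve a finite two-player safety game over (vertex, knowledge) positions, whereas the paper never manipulates knowledge sets at all. It builds a finite tree unfolding of the arena (sound because, for a safety objective, the coalition can replay at a repeated vertex the $\omega$-word it played at the first occurrence), and then reduces the existence of a winning strategy on that tree to non-emptiness of a single deterministic safety automaton over $\Sigma^m$ ($m$ the number of internal tree nodes), obtained as a product of the edge DFAs; lasso witnesses of non-emptiness give both the algorithm and the synthesized strategy. Your lower-bound sketches (QBF with primality/divisibility encodings for \PSPACE-hardness, a Chinese-remainder family for the memory blow-up) coincide with the paper's and are fine.

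The gap is in your ``structural lemma'', on which everything else rests, and neither its statement nor its sketch holds up. Quantitatively: the lcm of the cycle lengths of the product automaton $A$, which has exponentially many states, is in general \emph{doubly} exponential in the input, not exponential; so even granting the lemma, knowledge positions described by residues modulo such a period would not be exponentially many, and neither the \EXPSPACE{} upper bound nor the exponential-memory bound would follow. (This is consistent with the paper, which notes that the winning $\omega$-words extracted from its automaton can be of doubly-exponential size; only the \emph{memory structure} is exponential, precisely because memory states never store the words themselves.) Conceptually: since each arena transition reads a \emph{fresh} word from scratch, the edge DFAs restart at every step, so knowledge evolves by intersection, $K' = K \cap \{k \mid \wpref{w}{k} \in \Delta(v,v')\}$; along a play these sets can strictly shrink forever and their periods can grow by lcm at every step, and nothing in your sketch caps this, so your game $\mathcal{G}'$ is not known to be finite. (Also, ``keeping all consistent $k$ synchronized on the same states of $A$'' is not meaningful: distinct $k$'s read prefixes of different lengths and hence sit at different states.) What actually makes a finite abstraction work is a monotonicity/replay argument: if the coalition wins from $(v,K)$, the same strategy wins from $(v,K')$ for every $K' \subseteq K$, so on revisiting a vertex the coalition may ignore any refinement of its knowledge and replay the word from the first visit. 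This is exactly the paper's tree-unfolding correctness lemma (its $\zip$ function), and it --- not ultimate periodicity --- is the source of the exponential bound. Note also the circularity risk in your plan: the fact that ultimately periodic words suffice is, in the paper, a \emph{consequence} of the finite-state reformulation (lassos in the product safety automaton); proving it upfront, as your plan requires, is essentially as hard as the theorem itself.
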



\section{Solving the safety coalition problem}
\label{sec:resolution}
This section is devoted to the proof of Theorem~\ref{theo:main}.
To prove the decidability and establish the complexity upper bound, we
construct a tree unfolding of the arena, which is equivalent for
deciding the existence of a winning coalition strategy. The unfolding
is finite because, if a vertex is repeated along a play, the coalition
can play the same $\omega$-word as in the first visit, which will be
formalized in Section~\ref{subsec:tree-unfold}. We can then show how
to solve the safety coalition problem at the tree level in
Section~\ref{subsec:big-aut}. Synthesis and memory usage are analyzed
in Section~\ref{subsec:synthesis}, and the running example game is
discussed in Section~\ref{subsec:illustration}

The hardness result is shown in
Section~\ref{subsec:lower-bound} by a reduction from the
\textsf{QBF-SAT} problem which is known to be
\PSPACE-complete~\cite{StockmeyerM73}.

\subsection{Finite tree unfolding}
\label{subsec:tree-unfold}

From a parameterized safety game $\game = (\arena,S)$, we construct a
finite tree as follows: we unfold the arena $\arena$ until either some
vertex is repeated along a branch or an unsafe vertex is reached.  The
nodes of the tree are labeled with the corresponding vertices and the
edges are labeled with the same regular languages as in the arena
$\arena$. The intuition behind this construction is that if a vertex
is repeated in a winning play in $\arena$, since the winning condition
is a safety one, the coalition can play the same strategy as it played
in the first occurrence of the vertex. Note however that multiple
nodes in the tree may have the same label but different (winning)
strategies depending on the history (recall Example~\ref{ex:1}). This
is the reason why we need to consider a tree unfolding abstraction and
not a DAG abstraction.

We assume the concept of tree is known.  Traditionally, we call a node
$\node'$ a \emph{child} of $\node$ (and $\node$ the \emph{parent} of
$\node'$) if $\node'$ is an immediate successor of $\node$ according
to the edge relation; and $\node$ an \emph{ancestor} of $\node'$ if
there exists a path from $\node$ to $\node'$ in the tree.
  \begin{definition}
    Let $\game = (\arena=\tuple{V, \Sigma, \Delta},S)$ be a
    parameterized safety game and $v_0 \in V$ an initial vertex.
    The \emph{tree unfolding} of $\game$ is the tree
    $\tree = \langle {N,E, \ell_N,\ell_E} \rangle$ rooted at
    $\node_0 \in N$, where $N$ is the finite set of nodes,
    $E \subseteq N \times N$ is the set of edges, $\ell_N : N \to V$
    is the node labeling function, $\ell_E: N \times N \to 2^{\Sigma^+}$
    is the edge labeling function, and:
\begin{itemize}
\item
  the root $\node_0$ satisfies $\ell_N(\node_0) = v_0$;
\item $\forall \node \in N$, if $\ell_N(\node)\in S$ and for
  every ancestor $\node''$ of $\node$,
  $\ell_N(\node'') \ne \ell_N(\node)$, then $\forall v' \in V$ such
  that $\Delta(v,v')$ is defined, there is $\node'$ a child of $\node$ with
  $\ell_N(\node') = v'$ and $\ell_E(\node,\node') = \Delta(v,v')$;
   otherwise, the node $\node$ has no successor.
\end{itemize}
\end{definition}

Each node in $\tree$ corresponds to a unique history in $\game$, and
the unfolding is stopped when a vertex repeats or an unsafe vertex is
encountered.  The set of nodes can be partitioned into
$N = \nodeint \sqcup \nodeleaf$ where $\nodeint$ is the set of
internal nodes and $\nodeleaf$ are the leaves of $\tree$ (some leaves
are unsafe, some leaves have an equilabeled ancestor). By
construction, the height of $\tree$ is bounded by $|V| +1$ and its
branching degree is at most $|V|$. The tree unfolding of $\game$ is
hence at most in $O(|V|^{|V|})$ 
(and the exponential blowup is unavoidable in general). 

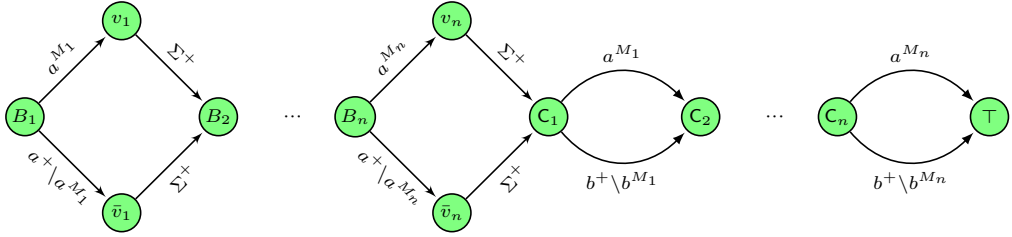
\begin{figure}[h]
\begin{center}
\begin{tikzpicture}[shorten >=1pt,node distance=1.8cm,on grid,auto,semithick]
    \everymath{\scriptstyle}
\node[state,inner sep=1pt,minimum size=5mm, fill=green!50] (B_1) {$B_1$};
\node[state,inner sep=1pt,minimum size=5mm, fill=green!50] (v_1) [above right = of B_1] {$v_1$};
\node[state,inner sep=1pt,minimum size=5mm, fill=green!50] (v_2) [below right = of B_1] {$\bar{v}_1$};
\node[state,inner sep=1pt,minimum size=5mm, fill=green!50] (B_2) [below right = of v_1] {$B_2$};
\node (dots) [right = 1cm of B_2] {$\cdots$};
\node[state,inner sep=1pt,minimum size=5mm, fill=green!50] (B_n) [ right = of B_2] {$B_n$};
\node[state,inner sep=1pt,minimum size=5mm,fill=green!50] (v_{2n-1}) [above right =  of B_n] {$v_n$};
\node[state,inner sep=1pt,minimum size=5mm, fill=green!50] (v_{2n}) [below right =  of B_n] {$\bar{v}_n$};
\node[state,inner sep=1pt,minimum size=5mm, fill=green!50] (C_1) [below right = of v_{2n-1}] {$\vclause{1}$};

\node[state,inner sep=1pt,minimum size=5mm, fill=green!50] (C_2) [ right = 2cm of C_1] {$\vclause{2}$};
\node (dots) [right = 1cm of C_2] {$\cdots$};
\node[state,inner sep=1pt,minimum size=5mm, fill=green!50] (C_{n-1}) [ right = of C_2] {$\vclause{n}$};
\node[state,inner sep=1pt,minimum size=5mm, fill=green!50] (C_n) [ right = 2cm of C_{n-1}] {$\top$};
 \path[every node/.style={sloped,anchor=south,auto=false},-latex']
(B_1) edge 	node [above] {$a^{M_1}$} (v_1)
 (B_1) edge node [below] {$a^+ \setminus a^{M_1}$} (v_2)
(v_1) edge node [above] {$\Sigma^{+}$} (B_2)
(v_2) edge node [below] {$\Sigma^{+}$} (B_2)
(B_n) edge node [above] {$a^{M_n}$} (v_{2n-1})
(B_n) edge node [below] {$a^+ \setminus a^{M_n}$} (v_{2n})
(v_{2n-1}) edge  node [above] {$\Sigma^{+}$} (C_1)
(v_{2n}) edge  node [below] {$\Sigma^{+}$} (C_1)
;
 \draw [-latex] (C_1) .. controls +(50:1cm) and +(135:1cm) .. (C_2) node [midway,above] {$a^{M_1}$}; 
 \draw [-latex] (C_1) .. controls +(-50:1cm) and +(-135:1cm) .. (C_2) node [midway,below] {$b^+ \setminus b^{M_1}$}; 
  \draw [-latex] (C_{n-1}) .. controls +(50:1cm) and +(135:1cm) .. (C_n) node [midway,above] {$a^{M_n}$}; 
 \draw [-latex] (C_{n-1}) .. controls +(-50:1cm) and +(-135:1cm) .. (C_n) node [midway,below] {$b^+ \setminus b^{M_n}$}; 
\end{tikzpicture}
\caption{Example arena such that the tree unfolding is exponential.
  All unspecified transitions lead to the sink losing vertex
  $\bot$. Set $M_i$ denotes multiples of the $i$-th prime number. For
  any play reaching $\vclause{1}$, for every $i$, the number of agents
  is in $M_i$ iff the play went through $v_i$.}
\label{fig:worstcase-example}
\end{center}
\end{figure}
 The
exponential bound is reached by a a family $(\arena_n)_{n \in \pnats}$
of deterministic arenas, shown in Fig.~\ref{fig:worstcase-example},
which is an extension of the example in
Fig.~\ref{fig:running-example}, with $2n$ many blocks (and, $O(n)$
many vertices).  Observe that to win the game, coalition needs to keep
track of the full histories in the first $n$ blocks, and there are
exponentially many such histories; moreover, each such history
corresponds to a different node in its unfolding
tree.

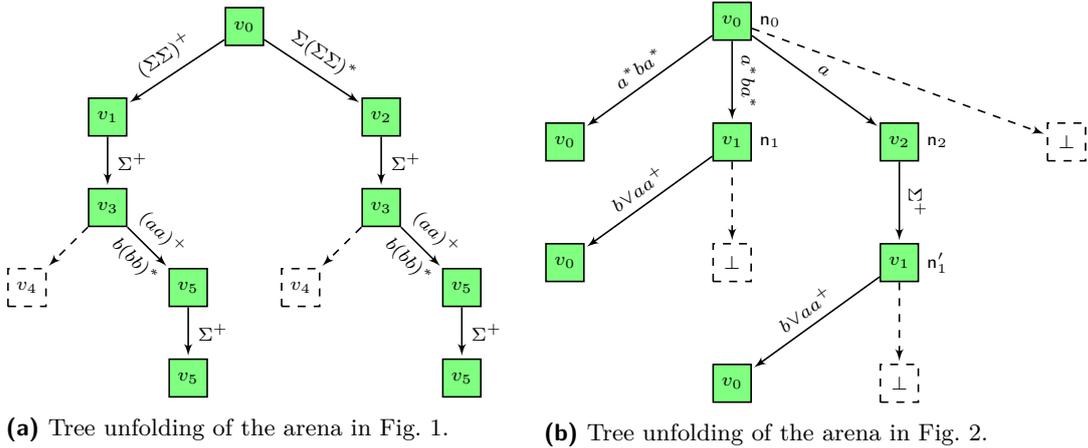
\begin{figure}[h]
\begin{subfigure}[l]{0.5\textwidth}
\begin{tikzpicture}[shorten >=1pt,node distance=12mm and 1.8cm,on grid,auto,semithick]
    \everymath{\scriptstyle}
\node[rectangle,draw,inner sep=1pt,minimum size=5mm, fill=green!50] (v_0) {$v_0$};
\node[rectangle,draw,inner sep=1pt,minimum size=5mm, fill=green!50] (v_1) [below left = of v_0] {$v_1$};
\node[rectangle,draw,inner sep=1pt,minimum size=5mm, fill=green!50] (v_2) [below right = of v_0] {$v_2$};
\node[rectangle,draw,inner sep=1pt,minimum size=5mm, fill=green!50] (v_31) [below  = of v_1] {$v_3$};
\node[rectangle,draw,inner sep=1pt,minimum size=5mm, fill=green!50] (v_32) [below  = of v_2] {$v_3$};
\node[rectangle,draw,inner sep=1pt,minimum size=5mm] (v_41) [below left = 1.5cm of v_31,dashed] {$v_4$};
\node[rectangle,draw,inner sep=1pt,minimum size=5mm, fill=green!50] (v_51) [below right = 1.5cm of v_31] {$v_5$};
\node[rectangle,draw,inner sep=1pt,minimum size=5mm, fill=green!50] (v_52) [below  = of v_51] {$v_5$};
\node[rectangle,draw,inner sep=1pt,minimum size=5mm] (v_411) [below left = 1.5cm of v_32,dashed] {$v_4$};
\node[rectangle,draw,inner sep=1pt,minimum size=5mm, fill=green!50] (v_511) [below right = 1.5cm of v_32] {$v_5$};
\node[rectangle,draw,inner sep=1pt,minimum size=5mm, fill=green!50] (v_521) [below  = of v_511] {$v_5$};

 \path[every node/.style={,anchor=south,auto=false},-latex']
(v_0) edge 	node [above,sloped] {$(\Sigma\Sigma)^+$} (v_1)
 (v_0) edge node [above,sloped] {$\Sigma(\Sigma\Sigma)^*$} (v_2)
(v_1) edge node [right] {$\Sigma^{+}$} (v_31)
(v_2) edge node [right] {$\Sigma^{+}$} (v_32)
(v_31) edge [dashed] node [above,sloped] {$$} node [below,sloped] {$$} (v_41)
(v_31) edge node [above,sloped] {$(aa)^+$} node [below,sloped] {$b(bb)^*$} (v_51)
(v_32) edge [dashed] node [above,sloped] {$$} node [below,sloped] {$$} (v_411)
(v_32) edge node [above,sloped] {$(aa)^+$} node [below,sloped] {$b(bb)^*$} (v_511)
(v_51) edge node [right] {$\Sigma^{+}$} (v_52)
(v_511) edge node [right] {$\Sigma^{+}$} (v_521)
;
\end{tikzpicture}
\caption{Tree unfolding of the arena in
  Fig.~\ref{fig:running-example}.}
\label{fig:unfold1}
\end{subfigure}
\begin{subfigure}[l]{0.5\textwidth}
\begin{tikzpicture}[shorten >=1pt,node distance=1.6cm and 2.2cm,on grid,auto,semithick]
    \everymath{\scriptstyle}
\node[rectangle,draw,inner sep=1pt,minimum size=5mm, fill=green!50]
(v_0) {$v_0$};
\node [right =.5cm of v_0] {$\node_0$};
\node[rectangle,draw,inner sep=1pt,minimum size=5mm, fill=green!50]
(v_01) [below left = of v_0] {$v_0$};
\node[rectangle,draw,inner sep=1pt,minimum size=5mm, fill=green!50]
(v_1) [below  = of v_0] {$v_1$}; 
\node [right =.5cm of v_1] {$\node_1$};
\node[rectangle,draw,inner sep=1pt,minimum size=5mm, fill=green!50]
(v_2) [below right = of v_0] {$v_2$}; 
\node [right =.5cm of v_2] {$\node_2$};
\node[rectangle,draw,inner sep=1pt,minimum size=5mm]
(bot1) [right = of v_2,dashed] {$\bot$};

\node[rectangle,draw,inner sep=1pt,minimum size=5mm, fill=green!50]
(v_02) [below  left = of v_1] {$v_0$}; 
\node[rectangle,draw,inner sep=1pt,minimum size=5mm]
(bot2) [right = of v_02,dashed] {$\bot$};

\node[rectangle,draw,inner sep=1pt,minimum size=5mm, fill=green!50]
(v_12) [below   =  of v_2] {$v_1$}; 
\node [right =.5cm of v_12] {$\node'_1$};

\node[rectangle,draw,inner sep=1pt,minimum size=5mm, fill=green!50]
(v_03) [below   left =  of v_12] {$v_0$};
\node[rectangle,draw,inner sep=1pt,minimum size=5mm]
(bot3) [right = of v_03,dashed] {$\bot$};

 \path[every node/.style={anchor=south,auto=false},-latex']
(v_0) edge node [above,sloped] {$a^*ba^*$} (v_01) 
(v_0) edge node [above,sloped] {$a^*ba^*$} (v_1) 
(v_0) edge [dashed] node [above,sloped] {} (bot1)
(v_0) edge node [above,sloped] {$ a$} (v_2) 
(v_2) edge node [above,sloped] {$\Sigma^+ $} (v_12) 
(v_1) edge node [above,sloped] {$b\lor aa^+$} (v_02) 
(v_1) edge [dashed] node [above,sloped] {} (bot2)
(v_12) edge node [above,sloped] {$b\lor aa^+$} (v_03)
(v_12) edge [dashed] node [above,sloped] {} (bot3) ;
\end{tikzpicture}
\caption{Tree unfolding of the arena in Fig.~\ref{fig:example-arena}.}
\label{fig:unfold2}
\end{subfigure}
\caption{Tree unfolding examples (green nodes correspond to safe vertices).
Notice here that the unsafe leaves (and the edges leading to them) are presented with
dashed rectangles (resp. arrows).}
\label{fig:unfolding}
\end{figure}

\begin{example}
  Fig.~\ref{fig:unfold1} and \ref{fig:unfold2} represent the tree
  unfoldings of the parametrized arenas depicted in
  Fig.~\ref{fig:running-example} and~\ref{fig:example-arena},
  respectively. On the left picture, the node names are avoided, and
  in all cases their labels are written within the nodes.
  The leaf nodes that correspond to unsafe vertices (and the edges
  leading to them) are presented with dashed rectangles
  (respectively,~arrows).  Notice that any leaf node is either labeled
  with an unsafe vertex (for instance, $v_4$ in
  Fig.~\ref{fig:unfold1}) or it has a unique ancestor with the same
  label.  These two criteria ensure the tree is always finite (along
  all branches, some vertex has to repeat within $|V|$ many steps).
  However, multiple internal nodes in different branches can have same
  label but,  coalition might have different
  (winning) strategies depending on their respective histories.
  \end{example}

Let $\game = (\arena=\tuple{V, \Sigma, \Delta},S)$ be a parameterized
safety game with an initial vertex $v_0$ and
$\tree = \langle {N,E, \ell_N,\ell_E} \rangle$ be the tree unfolding
corresponding to $\arena$ with root $\node_0$.  We define the
coalition game on $\tree$ as follows.

\subparagraph*{History, play and strategy.}
Histories in $\tree$ are defined similarly as in $\game$ (except
vertices are replaced by nodes); the set of such histories is denoted
$\Histree$.
 A \emph{history} in $\tree$ is a finite sequence of nodes
 $H = \node_0 \node_1 \ldots \node_p \in N^+$ such that for every
 $0 \le j < p$, $(\node_{j},\node_{j+1}) \in E$.  We denote by
 $\Histree$ the set of all histories in $\tree$.
A \emph{play} in $\tree$ is a maximal history, \emph{i.e.}, a finite
sequence of nodes ending with a leaf, thus in
$\nodeint^+\cdot\nodeleaf$.
Note that, contrary to the definition of a play
in $\arena$, a play in $\tree$ is a \emph{finite} sequence of nodes
ending in a leaf.

A \emph{coalition strategy} in the unfolding tree is a mapping
$\stratree : \nodeint \to \Sigma^\omega$ that assigns to every
internal node $\node \in \nodeint$ an $\omega$-word
$\stratree(\node)$.  Notice that a coalition strategy in $\tree$ is by
definition memoryless (on $N$) which, as we will see later, is
sufficient to capture winning strategies of the coalition in $\game$.
We furthermore extend the definition of node labeling function
$\ell_N$ to a history (resp.~play) in the usual way.

Similarly to the parameterized arena setting, we define in a natural
way the notions of $k$-realizability and of realizability for
histories and plays.
We also define for a coalition strategy $\stratree$ in $\tree$ (rooted
at $\node_0$), and $k\in \pnats$ the sets
$\koutree{\node_0,\stratree}$ and $\outree{\node_0,\stratree}$.

A coalition strategy $\stratree$ in $\tree$ from $\node_0$ is
\emph{winning} for the safety condition defined by the safe set $S$ if
every play in $\outree{\node_0,\stratree}$ ends in a leaf with label
in $S$, \emph{i.e.}, if for every
$R = \node_0 \ldots \node_p \in \outree{\node_0, \stratree}$,
$\ell_N(\node_p) \in S$, written
$\ell_N(\out{\node_0,\stratree}) \subseteq S^+$ for short.

\subparagraph*{Correctness of the tree unfolding.}
Next we show the equivalence of the winning strategies in the safety
coalition game, and in the corresponding tree unfolding:
   \begin{restatable}{lemma}{strateqv}
\label{lem:str-eqv}
Let $\game= (\arena=\tuple{V, \Sigma, \Delta},S)$ be a parameterized
safety game and $v_0 \in V$ and
$\tree = \langle {N,E, \ell_N,\ell_E} \rangle$ be the associated tree
unfolding with root $\node_0$. There exists a winning coalition
strategy from $v_0$ in $\game$ iff
there exists a winning coalition strategy from $\node_0$ in $\tree$.
\end{restatable}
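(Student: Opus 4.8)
The plan is to establish the two implications separately, relying throughout on the correspondence between the nodes of $\tree$ and histories of $\game$: since $\tree$ is a tree, every node $\node$ is reached from $\node_0$ by a unique branch, and reading the node labels along it yields a unique history $h_\node$ of $\game$ with $\last(h_\node)=\ell_N(\node)$. I will also use the structural fact that every \emph{internal} node of $\tree$ carries a safe label (by construction a node is given children only when its label lies in $S$), so that the only unsafe nodes of $\tree$ are unsafe leaves.

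For the forward implication, given a winning $\stratsys$ in $\game$ I would transport it along this correspondence by setting $\stratree(\node)=\stratsys(h_\node)$ for every $\node\in\nodeint$. To see that $\stratree$ is winning, I would pick any $k$ and any $k$-realizable play $R=\node_0\cdots\node_p\in\outree{\node_0,\stratree}$ and read the defining condition edge by edge: $\wpref{\stratree(\node_j)}{k}\in\ell_E(\node_j,\node_{j+1})=\Delta(\ell_N(\node_j),\ell_N(\node_{j+1}))$ says exactly that the projected sequence $\ell_N(\node_0)\cdots\ell_N(\node_p)$ is a $k$-realizable prefix consistent with $\stratsys$. Completeness lets me extend it to an infinite play of $\kout{v_0,\stratsys}$, which is safe because $\stratsys$ is winning; hence $\ell_N(\node_p)\in S$ and $R$ ends in a safe leaf.

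For the backward implication I would unfold a winning $\stratree$ into a game strategy by a loop-back construction using $\nodeint$ as memory. Starting from $\minit=\node_0$, given a current internal node $\node$ and a newly visited vertex $v'$ I follow the tree edge to the child $\node'$ of $\node$ labelled $v'$; if $\node'$ is internal I set the new memory to $\node'$, and if $\node'$ is a safe (repeat) leaf I redirect the memory to the unique ancestor of $\node'$ bearing the same label. (The unsafe-leaf case may be resolved arbitrarily, as it will never arise along plays consistent with the resulting strategy.) The game strategy is $\stratsys(h)=\stratree(\m[h])$, and a straightforward induction gives the invariant $\ell_N(\m[h])=\last(h)$, so the memory node always mirrors the current vertex.

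The hard part is showing this $\stratsys$ is winning, i.e.\ that no play ever reaches the label of an unsafe leaf. Fixing $k$ and a play $\rho=v_0v_1\cdots\in\kout{v_0,\stratsys}$, I would track the intermediate tree node that each $v_j$ lands on before any redirection; the loop-backs cut $\rho$ into successive finite segments, each descending from an internal node down to some leaf. The claim to prove is that every such leaf is safe (hence a repeat leaf, legitimising the redirection). The crucial observation is that, $\stratree$ being memoryless, a segment issued from a loop-back target $\node''$ can be \emph{glued} onto the tree branch $\node_0\rightsquigarrow\node''$ followed in the previous segment: via the invariant, the defining condition $\wpref{\stratsys(v_0\cdots v_j)}{k}\in\Delta(v_j,v_{j+1})$ is exactly $k$-consistency with $\stratree$ of the corresponding tree edge, so both pieces are $k$-realizable and $\stratree$-consistent and their concatenation is a genuine play of $\koutree{\node_0,\stratree}$. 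Since $\stratree$ is winning, that play ends in a safe leaf; by induction on the loop-backs every segment therefore ends safely, and as the intermediate internal nodes are safe as well, every $v_j\in S$. I expect this gluing step---checking that the spliced branch is a bona fide $\stratree$-consistent play of $\tree$ from the root---to be the only delicate point, the remainder being bookkeeping along the node/history correspondence.
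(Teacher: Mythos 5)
Your proposal is correct and takes essentially the same route as the paper: the forward direction is identical (project tree plays back to game histories), and your loop-back memory structure on $\nodeint$ is exactly the paper's construction, which formalizes the same update via the $\zip$ function (greedy loop removal on histories) and the bijection $\alpha$ between repetition-free histories and internal nodes, and then sets $\stratsys(h) = \stratree(\alpha(\zip(h)))$. The only difference is presentational: the paper establishes winningness by contradiction (zipping the shortest prefix reaching an unsafe vertex and lifting it to a $\stratree$-consistent, $k$-realizable play of $\tree$ ending in an unsafe leaf), while you maintain an inductive invariant across loop-back segments; both hinge on the same two facts, namely memorylessness of $\stratree$ and closure of consistency and $k$-realizability under taking prefixes.
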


\begin{proof}
  Assume first that the coalition of agents has a winning strategy
  $\stratsys$ in $\game$. Any history $H \in \Histree$ can be
  projected to the history $\ell_N(H) \in \Hist$. We can hence define
  for every $\node \in \nodeint$,
  $\stratree(\node) = \stratsys(\ell_N(\iota(\node)))$, where $\iota$
  is the bijection mapping nodes to histories in $\tree$. 
 To prove that $\stratree$ is winning in $\tree$, consider any play
  $R = \node_0 \cdots \node_p$ in $\outree{\node_0,\stratree}$ and let
  $\rho = \ell_N(R) = v_0 \cdots v_p$ be its projection in $\game$.
  By construction $\ell_E(\node_i,\node_{i+1}) = \Delta(v_i,v_{i+1})$
  for each $i < p$, and hence from the definition of $\stratree$,
  $\rho$ is a history in $\game$ induced by $\stratsys$.  Since
  $\stratsys$ is winning, $\rho$ only visits \emph{safe} vertices.  In
  particular, $\ell_N(\node_p) \in S$.  Since this is true for every
  play induced by $\stratree$, strategy $\stratree$ is winning from
  $\node_0$ in $\tree$.

  \smallskip For the other direction, assume that $\stratree$ is a
  winning coalition strategy from $\node_0$ in $\tree$. The tree will
  be the basis of a memory structure sufficient to win the game; we
  thus explain how histories in $\game$ can be mapped to nodes of
  $\tree$.  We first define a mapping $\zip : \Hist \to \Hist$ that
  summarizes any history in $\arena$ to its \emph{virtual history}
  where each vertex appears at most once. Intuitively, $\zip$ greedily
  shortens a history by appropriately removing the loops until an
  unsafe vertex is encountered (if any). The mapping $\zip$ is defined
  inductively, starting with $\zip(v_0) = v_0$, and letting for every
  $h\in \Hist$ and every $v'\in V$ such that $h \cdot v' \in \Hist$,
   \[
  \zip(h \cdot v') = \begin{cases}
    \zip(h) \cdot v' & \textrm{if } 
     v' \textrm{ does not appear in } \zip(h)\\
    v_0 \ldots v' \prefix \zip(h) & \textrm{otherwise}\\
  \end{cases}
\]
The mapping $\zip$ is well-defined (by construction, for every history
$h$, any vertex appears at most once in $\zip(h)$, so that when $v'$
appears in $\zip(h)$, there is a unique prefix of $\zip(h)$ ending
with $v'$). Note that, since unsafe vertices are sinks, as soon as $h$
reaches an unsafe vertex, the value of $\zip(h)$ stays unchanged.

\begin{restatable}{lemma}{betabijection}
\label{lem:betabijection}
  The application $\beta \colon \node \mapsto \ell_N(\iota(\node))$
  defines a bijection between
  $\nodeint \cup \{\node \in \nodeleaf \mid \ell_N(\node) \notin S\}$
  and the set $Z = \{\zip(h) \mid h \in \Hist\}$.
\end{restatable}

\begin{proof}
 It is first obvious that this application
 is injective, since two nodes of the tree corresponds to different
 histories in $\arena$ which all belong to $Z$.
 
 This application is surjective: pick $h \in Z$; then, $h$ has no
 repetition; furthermore it forms a real history in $\game$, which
 implies that it can be read as the label of some history in the tree
 unfolding.
\end{proof}

We write $\alpha = \beta^{-1}$.
Using the $\zip$ function and $\alpha$, from a coalition strategy
$\stratree$ in $\tree$, we define a
coalition strategy $\stratsys$ in $\game$ by applying $\stratsys$ to
the virtual histories:
for every history $h = v_0 \ldots v_p$ in $\game$ we let
$\stratsys(h) = \stratree(\alpha(\zip(h))$ whenever
$\alpha(\zip(h)) \in \nodeint$ and $\stratsys(h)$ is set arbitrarily
otherwise (recall that if $\alpha(\zip(h))$ is a leaf node, then $h$ is
actually already a losing history). 
\

Towards a contradiction, assume that $\stratsys$ is not winning in
$\game$.  Consider, some number of agents $k\in \pnats$, and a losing
play with $k$~agents:
$\rho = v_0 v_1 \ldots \in \kout{v_0,\stratsys}$.
Let $h' = v_0 v_1 \ldots v_q \prefix \rho$ be the shortest prefix of
$\rho$ ending in an unsafe vertex $v_q \notin S$, and write
$\zip(h') = v_0 v_{i_1} \ldots v_{q}$ for the corresponding virtual
history. 
By definition of $\stratsys$,
$\zip(h')$ is a $k$-outcome of $\stratsys$ from $v_0$. Moreover, the
corresponding play
$R = \iota(\alpha(\zip(h'))) = \node_0 \node_{i_1}\ldots \node_{q}$ in
$\tree$,
  belongs to the $k$-outcome
  of $\stratree$ from $\node_0$. Since $v_{q} \notin S$, $\stratree$
is not winning in $\tree$; which is a contradiction.  We conclude that
$\stratsys$ is a winning coalition strategy in $\game$.
\end{proof}

\begin{example}
  We illustrate the $\zip$ function on the arena in
  Fig.~\ref{fig:example-arena}.  Take $h = v_0 v_1 v_0 v_1$.  First,
  $\zip(v_0) = v_0$; then
  $\zip(v_0 v_1) = \zip(v_0)\cdot v_1 = v_0 v_1$;
  $\zip(v_0 v_1 v_0) = v_0$ (which is the unique prefix of
  $\zip(v_0 v_1) = v_0 v_1$, ending at $v_0$); finally
  $\zip(v_0 v_1 v_0 v_1) = \zip(v_0 v_1 v_0) \cdot v_1 = v_0 v_1$.
  Then the function $\alpha$ uniquely maps each virtual history
  (\emph{i.e.}, $\zip(h)$) ending at a safe vertex to an internal node
  in the tree, which is the heart of the proof of
  Lemma~\ref{lem:str-eqv}.
\end{example}

\subsection{Existence of winning coalition strategy on the tree unfolding}
\label{subsec:big-aut}
In the previous subsection, we showed that the safety coalition
problem reduces to solving the existence of a
winning coalition strategy in the associated finite tree unfolding.

To solve the latter, from the tree unfolding $\tree$, we construct a
deterministic (safety) automaton over the alphabet $\Sigma^{m}$, where
$m=|\nodeint|$, which accepts the $\omega$-words corresponding to
winning coalition strategies in $\tree$.  More precisely, since
$(\Sigma^m)^\omega$ and $(\Sigma^\omega)^m$, understood as the set of
$m$-tuples of $\omega$-words over $\Sigma$, are in one-to-one
correspondence, an infinite word $\word \in (\Sigma^m)^\omega$
corresponds to $m$ infinite words $\word_\node$, one for each
internal node $\node \in \nodeint$, thus representing a coalition
strategy in $\tree$.

\medskip Fix $\game = (\arena,S)$ a parameterized safety game with
$\arena=\tuple{V, \Sigma, \Delta}$ and $v_0 \in V$ an initial
vertex. We assume for every $(v,v') \in V\times V$ such that
$\Delta(v,v') \ne \emptyset$, $\Delta(v,v')$ is given as a complete
DFA over $\Sigma$.  Those will be given as inputs to the algorithm.

Let $\tree = \langle {N,E, \ell_N,\ell_E} \rangle$ be the associated
unfolding tree with root $\node_0$.  For the rest of this section, we
fix an arbitrary ordering on the internal nodes of $\tree$ and on the
edges: $\nodeint = \{\node_1, \ldots, \node_m\}$ and
$E = \{e_1,\ldots, e_r\}$, with $|\nodeint| = m$ and $|E| = r$.

Assuming there are $t$ leaves --thus $t$ plays-- in $\tree$, for every
$1 \leq i\le t$, the $i$-th play is denoted
$\nodeij{i}{0} \ldots \nodeij{i}{z_i}$ with $\nodeij{i}{0} = \node_0$,
$\forall j<z_i, \ \nodeij{i}{j} \in \nodeint$ and
$\nodeij{i}{z_i} \in \nodeleaf$. Also, for $0 \le j < z_i$, we note
$e^i_j = (\nodeij{i}{j} ,\nodeij{i}{j+1})$.

The automaton for the winning coalition strategies in $\tree$ builds
on the finite automata that recognize the regular languages that label
edges of $\tree$. For each edge $e \in E$, let us write
$\B_{e} = (Q_{e}, \Sigma, \delta_{e}, q^0_{e}, F_{e})$ for the
complete DFA over $\Sigma$ such that $L(\B_e) = \ell_E(e)$. (Here
$Q_e$ is the set of states, $\delta_e$ the transition function,
$q^0_e \in Q_e$ the initial state and $F_e$ the set of accepting
states.) Note that some of the $\B_e$'s are identical since they
  correspond to the same original edge of $\game$.

We then define a deterministic safety automaton
$\B = (Q, \Sigma^m, \delta, q^0, F)$ 
that simulates all $\B_e$'s in parallel and
accepts $\omega$-words over alphabet $\Sigma^m$ 
if every prefix satisfies the following:
on every branch of the tree, if all corresponding $\B_e$'s
accept, then the leaf is labeled by a safe vertex. Formally,
$Q \subseteq Q_1\times\ldots \times Q_r$ is the set of states;
$q^0 = (q_1^0, \ldots, q_r^0)$ is the initial state; the transition
relation $\delta$ executes the $r$ automata $\B_e$'s componentwise: if
letter $u \in \Sigma^m$ is read, then make the $s$-th component mimick
$\B_{e_s}$ by reading the $l$-th letter of $u$, where $l$ is the index
(in the enumeration fixed above)
of the source node of $e_s$; and the accepting set $F$ is composed of
all states $q = (q_1, \ldots, q_r)$ that satisfy the following Boolean
formula:
\[
\varphi = \bigwedge\limits_{1 \le i \le t} \varphi_i
\quad\quad  \text{ where }
\varphi_i =  \left( \bigg[\bigwedge\limits_{0 \le j < z_i} q_{e^i_j} \in F_{e^i_j} \bigg] \Rightarrow \ell_N(\nodeij{i}{z_i}) \in S\right).
\]

Note that $\B$ is equipped with a safety acceptance
condition:\footnote{This is a slight abuse of language since $q^0$
  need not be in the safe set $F$.} an infinite run
$\run = q^0q^1q^2 \ldots $ of $\B$ is accepting if for every
$k \ge 1$, $q^k \in F$, and $L(\B)$ consists of all words $\word$
whose unique corresponding run is accepting.

Intuitively, $\varphi_i$ expresses that if for some number of agents
$k$, the languages along the $i$-th maximal path contain the
$k$-length prefixes of corresponding $\omega$-words (which means the
induced play is $k$-realizable), then it should lead to a safe
leaf
; and then $\varphi$ ensures that this should be true for all plays.
This is formalized in the next lemma.


\begin{restatable}{lemma}{bigautomaton}
\label{lem:automatonB}
Let  $\stratree: \nodeint \to \Sigma^\omega$ be a coalition strategy 
 in $\tree$. Then, $\stratree$ is winning if and only if
$(\stratree(\node_1),\stratree(\node_2),\ldots,\stratree(\node_m)) \in L(\B)$.
\end{restatable}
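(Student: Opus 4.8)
The plan is to show that the run of $\B$ on the tuple $\word = (\stratree(\node_1), \dots, \stratree(\node_m))$ encodes, at each time step $k$, exactly the information of which plays of $\tree$ are $k$-realizable under $\stratree$, and that the safety-acceptance condition defined by $F$ then translates verbatim into the winning condition for $\stratree$. Both directions of the biconditional will come out of this single computation, rather than requiring separate arguments.

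First I would unwind the definition of the transition relation $\delta$ on the input $\word$. Writing $\run = q^0 q^1 q^2 \cdots$ for the unique run of $\B$ on $\word$ and decomposing each state as $q^k = (q^k_{e_1}, \dots, q^k_{e_r})$, the key bookkeeping step is that the $s$-th component tracks the DFA $\B_{e_s}$ fed with the letters of $\stratree(\node_l)$, where $\node_l$ is the source node of edge $e_s$. Indeed, at step $k$ the automaton reads the $k$-th letter of $\word$, whose $l$-th coordinate is the $k$-th letter of $\stratree(\node_l)$, so an immediate induction on $k$ shows that $q^k_{e_s}$ is the state reached by $\B_{e_s}$ after reading $\wpref{\stratree(\node_l)}{k}$. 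Consequently $q^k_{e_s} \in F_{e_s}$ if and only if $\wpref{\stratree(\node_l)}{k} \in L(\B_{e_s}) = \ell_E(e_s)$.

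With this observation I would evaluate the formula $\varphi$ at $q^k$. For a fixed play $i$, the antecedent $\bigwedge_{0 \le j < z_i} q^k_{e^i_j} \in F_{e^i_j}$ holds exactly when $\wpref{\stratree(\nodeij{i}{j})}{k} \in \ell_E(e^i_j)$ for every $j < z_i$, which is precisely the statement that the $i$-th play is $k$-realizable under $\stratree$. Hence $q^k \in F$ (i.e.\ $\varphi$ holds at $q^k$) if and only if, for every play $i$ that is $k$-realizable under $\stratree$, its leaf label $\ell_N(\nodeij{i}{z_i})$ lies in $S$. Since $\B$ carries a safety acceptance condition, $\word \in L(\B)$ if and only if $q^k \in F$ for all $k \ge 1$.

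It then remains to match this with the winning condition. A play $R = \nodeij{i}{0} \cdots \nodeij{i}{z_i}$ belongs to $\outree{\node_0, \stratree} = \bigcup_{k \in \pnats} \koutree{\node_0, \stratree}$ if and only if it is $k$-realizable under $\stratree$ for some $k \ge 1$. Therefore the condition ``for all $k$ and all $k$-realizable plays $i$, the leaf is safe'' is equivalent to ``every play in $\outree{\node_0,\stratree}$ ends in a leaf labelled by $S$'', which is exactly the definition of $\stratree$ being winning. Chaining these equivalences yields the lemma. The only delicate point is the indexing in the parallel simulation: one must verify that each component DFA reads the letters of the strategy word attached to its \emph{own} source node, so that ``all component DFAs along a branch accept at step $k$'' aligns exactly with ``$k$-realizability of that branch''. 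Once this alignment is pinned down, everything else is a chain of definitional equivalences.
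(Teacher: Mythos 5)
Your proof is correct and follows essentially the same approach as the paper's: you relate each component $q^k_{e_s}$ of the run to the DFA $\B_{e_s}$ reading the $k$-prefix of the strategy word of its source node, identify the antecedent of $\varphi_i$ with membership of the $i$-th play in $\koutree{\node_0,\stratree}$, and match the safety acceptance over all $k$ with the winning condition. The only difference is presentational --- you organize it as one chain of equivalences (making explicit the componentwise induction the paper leaves implicit) where the paper argues the two implications separately --- which does not change the substance.
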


Notice that in the above statement, we slightly abuse notation:
$(\stratree(\node_1),\stratree(\node_2),\ldots,\stratree(\node_m))$
belongs to $(\Sigma^\omega)^m$, however it uniquely maps to a word in
$(\Sigma^m)^\omega$, that can thus be read in $\B$.
\begin{proof}
  Assume $\stratree: \nodeint \to \Sigma^\omega$ is a winning
  coalition strategy in $\tree$, and consider the corresponding word
  $\word =
  (\stratree(\node_1),\stratree(\node_2),\ldots,\stratree(\node_m))$.
  Let us show that $\word \in L(\B)$.  Consider the infinite run
  $\run = q^0q^1 \ldots $ of $\B$ on $\word$.  Fix a number of agents
  $k \in \pnats$.  Since $\stratree$ is winning, any $k$-length prefix
  of $\stratree$-induced play in $\koutree{\node_0,\stratree}$ is
  winning. Therefore for any $1 \le i \le t$ such that
  $\nodeij{i}{0} \ldots \nodeij{i}{z_i}$ is in
  $\koutree{\node_0,\stratree}$, the play satisfies for all
  $0 \le j < z_i$,
  $\wpref{\stratree(\nodeij{i}{j})}{k} \in \ell_E(e^i_j)$
  and furthermore,
  $\ell_N(\nodeij{i}{z_i}) \in S$; and hence $q^k \models \varphi_i$.
  Otherwise, if a length $k$-play is not induced by $\stratree$, then
  $\varphi_i$ is vacuously true for that $i\le t$.  We conclude
  $q^k \models \varphi$.  Since this is true for every $k\in \pnats$,
  $\word \in L(\B)$.

  Let now $\stratree$ be an arbitrary coalition strategy, and assume
  $\word =
  (\stratree(\node_1),\stratree(\node_2),\ldots,\stratree(\node_m))
  \in L(\B)$ with $\run = q^0q^1 \ldots $ the accepting run on
  $\word$.  Then for any number of agents $k \in \pnats$, $q^k \in F$,
  and hence $q^k \models \varphi$.  Therefore for all $1 \le i \le t$,
  $q^k \models \varphi_i$.  Fix any such $i$; let
  $\nodeij{i}{0} \ldots \nodeij{i}{z_i}$ be the $i$-th maximal path,
  and write $q^k = (q^k_1,\ldots, q_r^k)$.  Then for some
  $0 \le j < z_i$, the condition $q^k_{e^j_i} \in F_{e^j_i}$ implies
  $\wpref{\stratree(\nodeij{i}{j})}{k} \in \ell_E(e_j^i)$.
  In case the above is true for all $0 \le j < z_i$, we conclude
  $\nodeij{i}{0} \ldots \nodeij{i}{z_i} \in \koutree{\node_0,
    \stratree}$
  and $\varphi_i$ ensures that $\ell_N(\nodeij{i}{z_i}) \in S$.
  Otherwise,
  $\nodeij{i}{0} \ldots \nodeij{i}{z_i} \notin \koutree{\node_0,
    \stratree}$.  Finally $\varphi$ ensures all $k$-length prefixes of
  $\stratree$-induced
  plays in $\tree$ are winning.  Since this is true for any
  number of agents $k$, $\stratree$ is a winning coalition strategy in
  $\tree$.
\end{proof}

We now have all ingredients to solve the safety coalition problem, and
to state a complexity upper-bound. As mentioned earlier, we assume
that the arena is initially given with all associated complete DFAs
(used by all $\B_e$) in the input.
 
 \begin{restatable}{theorem}{expspace}
 \label{th:expspace}
  The safety coalition problem is in \EXPSPACE.
\end{restatable}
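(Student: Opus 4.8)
The plan is to reduce the question to non-emptiness of the safety automaton $\B$, and then to argue that this non-emptiness test can be carried out \emph{on the fly} in exponential space. By Lemma~\ref{lem:str-eqv}, a winning coalition strategy exists from $v_0$ in $\game$ if and only if one exists from $\node_0$ in the tree unfolding $\tree$; and by Lemma~\ref{lem:automatonB}, a winning coalition strategy in $\tree$ exists if and only if the corresponding word lies in $L(\B)$. Hence the whole problem boils down to deciding whether $L(\B) \neq \emptyset$.

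First I would record the relevant size bounds. The tree unfolding has at most $O(|V|^{|V|})$ nodes, so the numbers $m = |\nodeint|$ and $r = |E|$ of internal nodes and of edges, as well as the number $t$ of leaves, are all at most exponential in the size of the input. Each edge automaton $\B_e$ is a DFA supplied in the input, hence has a polynomial number of states. A single state of $\B$ is an $r$-tuple $(q_1,\ldots,q_r)$ with $q_s \in Q_{e_s}$, so it can be stored in space $r \cdot \max_e \lceil \log |Q_e|\rceil$, which is exponential. By contrast the total number of states, bounded by $\prod_e |Q_e|$, is \emph{doubly} exponential, so $\B$ cannot be built explicitly within the desired bound. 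The crux of the proof is therefore to manipulate $\B$ only implicitly, handling one state at a time.

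Next I would use that, since $\B$ is a deterministic and complete safety automaton, every infinite word has a unique run, and $L(\B) \neq \emptyset$ exactly when there is an infinite run $q^0 q^1 q^2 \cdots$ with $q^k \in F$ for all $k \ge 1$. As the state space is finite, such a run exists if and only if some state of $F$ that is reachable from $q^0$ through $F$-states lies on a cycle entirely contained in $F$. Deciding $L(\B)\neq\emptyset$ thus reduces to a reachable-cycle search in the implicit graph whose vertices are the states of $\B$ lying in $F$. I would perform this search nondeterministically, guessing a lasso: repeatedly guess a letter $u \in \Sigma^m$, compute the successor of the current state on the fly, and check that the reached state is in $F$; nondeterministically mark and store a candidate ``cycle root'', then verify in a second phase that it is revisited. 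Two counters of exponential bit-length suffice to bound the two path segments by the (doubly-exponential) number of states of $\B$.

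Finally I would check that every primitive of this search runs in exponential space. The tree unfolding together with the chosen ordering of its nodes and edges can be stored (or recomputed) in exponential space; the successor of a stored state under a letter $u$ is obtained by applying each $\delta_{e_s}$ to the appropriate letter of $u$; and membership of a state in $F$ is decided by evaluating the formula $\varphi$, iterating over the at most exponentially many plays $i \le t$ and, for each, over the edges $e^i_j$ along it. Since the algorithm stores only a constant number of $\B$-states and a constant number of counters, each of exponential size, it runs in nondeterministic exponential space; by Savitch's theorem, nondeterministic and deterministic exponential space coincide (squaring an exponential bound keeps it exponential), so the safety coalition problem is in $\EXPSPACE$. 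The main obstacle, as anticipated, is precisely to avoid the explicit doubly-exponential construction of $\B$, so that all reasoning proceeds state-by-state over an implicitly represented automaton.
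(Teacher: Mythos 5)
Your proof is correct and takes essentially the same approach as the paper: reduce to non-emptiness of the implicitly represented deterministic safety automaton $\B$, nondeterministically guess a safe prefix and lasso state-by-state (storing only a constant number of exponential-size states and counters, and checking membership in $F$ by evaluating $\varphi$), then determinize via Savitch's theorem. Your description of the $F$-membership test (iterating over the plays of the tree) is in fact slightly more explicit than the paper's.
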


\begin{proof}
  Solving the safety coalition problem reduces to checking
  non-emptiness of the language recognized by the deterministic safety
  automaton $\B$. We adapt to our setting the standard algorithm which
  runs in non-deterministic logarithmic space, when $\B$ is given as
  an input.

  We write $N$ for the number of states
  of $\B$ and notice that $N$ is doubly exponential in $|V|$, the
  number of vertices of the initial arena $\arena$ (each state
  of $\B$ is an exponential-size vector of states of automata given in
  the input).  We do not build $\B$ a priori. Instead, we
  non-deterministically guess 
  a safe prefix of length at most $N$ (we only keep written two
    consecutive configurations and keep a counter to count up to
    $N$), and then a safe lasso on the last state of length at most
  $N$.

  Provided one can check `easily' whether a state of $\B$ 
  is safe, q the described procedure runs in non-deterministic
  exponential space, hence can be turned into a deterministic
  exponential space algorithm, by Savitch's theorem.

  It remains to explain how one checks that a given state in $\B$
  is safe. Formula $\varphi$ is a SAT formula exponential in the size 
 of $\arena$, which can therefore be solved in exponential
  space as well.

  Overall, we conclude that the safety coalition problem is in
  \EXPSPACE.
\end{proof}

\subsection{Synthesizing a winning coalition strategy}
\label{subsec:synthesis}

We assume all the notations of the two previous subsections, and we explain
how we build a winning coalition strategy.  From an
accepting word of the form $\mathbf{u} \cdot \mathbf{v}^\omega$ in
$\B$ (where $\mathbf{u} \in \big(\Sigma^m\big)^*$ and
$\mathbf{v} \in \big(\Sigma^m\big)^+$), one can synthesize a winning
strategy $\stratree$ in $\tree$ by: 
\[
  \stratree(\node_i) = \mathbf{u}_{i} \cdot \mathbf{v}^\omega_{i}
  \quad \text{for every $\node_i \in \nodeint$.}
\]
Then it is  easy to transfer to a winning coalition strategy
$\stratsys$ in $\game$ by defining
\[
\stratsys(h) = \stratree(\alpha(\zip(h))) \quad \text{for every history $h \in \Hist$,}
\]
that is, the $\omega$-word corresponding to the internal node
representing its virtual history. Recall that, following the proof of
Lemma~\ref{lem:str-eqv}, $\zip$ assigns to every history its virtual
history (by greedily removing all the loops) and $\alpha$ associates
to a virtual history its corresponding node in the tree $\tree$.

\begin{restatable}{proposition}{memory}
  If there is a winning coalition strategy for a game $\game =
  (\arena,S)$, then there is one which uses exponential memory, which
  can be computed in exponential space. Furthermore, winning might
  indeed require exponential memory.
\end{restatable}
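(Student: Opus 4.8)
The plan is to prove both the upper bound on memory and the matching lower bound separately.

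For the upper bound, I would combine the previous results into a single pipeline. First, by Lemma~\ref{lem:str-eqv}, the existence of a winning coalition strategy in $\game$ is equivalent to its existence in the tree unfolding $\tree$; and by Lemma~\ref{lem:automatonB}, the latter is equivalent to non-emptiness of $L(\B)$. The automaton $\B$ is a deterministic safety automaton, so its language, when non-empty, contains an ultimately periodic word $\mathbf{u}\cdot\mathbf{v}^\omega$ obtained from a lasso whose stem and loop are each of length at most the number of states $N$ of $\B$. As observed in the proof of Theorem~\ref{th:expspace}, $N$ is at most doubly exponential in $|V|$, but the synthesized strategy $\stratree(\node_i) = \mathbf{u}_i\cdot\mathbf{v}^\omega_i$ need not carry all of $\B$ as memory. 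The memory structure is the tree $\tree$ together with the $\zip$/$\alpha$ bookkeeping described just before this proposition: the memory state after a history $h$ is the node $\alpha(\zip(h))\in\nodeint$, the update function appends a vertex and removes a loop if one is closed, and $\next$ reads off the ultimately periodic word attached to that node. Since $|\nodeint|$ is at most $O(|V|^{|V|})$, i.e.\ exponential in the size of the arena, this yields a finite-memory winning coalition strategy with exponentially many memory states, and it can be computed in exponential space by running the lasso-search of Theorem~\ref{th:expspace} and then recording, for each of the $m$ internal nodes, the corresponding projection of the lasso.

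For the lower bound — that winning may require exponential memory — I would reuse the family $(\arena_n)_{n\in\pnats}$ from Fig.~\ref{fig:worstcase-example}, which was already introduced to witness an exponential tree unfolding. The idea is that to win on $\arena_n$, the coalition must, upon reaching the clause-checking phase, play differently depending on which of the $2^n$ combinations of choices was made in the first $n$ blocks, since each combination encodes a distinct truth assignment determined by residues modulo the first $n$ primes. One then argues that any two histories leading to $\vclause{1}$ through different block-choices cannot be collapsed into the same memory state without producing a losing play for some number of agents $k$; hence any winning finite-memory strategy must distinguish at least $2^n$ histories, forcing exponentially many memory states. Making this distinguishability argument precise is the main obstacle: I would need to exhibit, for each pair of distinct histories, a concrete number of agents $k$ witnessing that merging them is unsafe, which relies on the Chinese Remainder Theorem to guarantee that the residue pattern modulo the primes $M_1,\ldots,M_n$ is realized by some $k$, so that every one of the exponentially many patterns is genuinely reachable and imposes a distinct constraint downstream.

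The upper-bound direction is essentially assembling results already proved, so the only care needed there is to correctly identify the memory structure as $(\nodeint,\upd)$ rather than the state space of $\B$, keeping the memory exponential rather than doubly exponential. The genuinely new content, and the step I expect to be delicate, is the lower bound, where I must verify that the arena of Fig.~\ref{fig:worstcase-example} indeed admits a winning coalition strategy (so the statement is about the memory of an \emph{existing} winning strategy) while no sub-exponential memory suffices.
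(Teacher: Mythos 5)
Your proposal is correct and follows essentially the same route as the paper: the upper bound takes the tree unfolding $(\nodeint,\upd,\next)$ as the exponential-size memory structure, with the ultimately periodic words $\mathbf{u}_i\cdot\mathbf{v}_i^\omega$ extracted from a lasso of $\B$ (the paper likewise notes these words may be doubly-exponential in length while the computation stays in exponential space), and the lower bound uses the family of Fig.~\ref{fig:worstcase-example} with the pigeonhole argument that fewer than $2^n$ memory states would merge two histories reaching $\vclause{1}$ that disagree on some predicate ``multiple of the $i$-th prime''. The step you flag as delicate is resolved exactly as you sketch: every residue pattern is realized by some number of agents (e.g., the product of the selected primes), and merged histories then force the same $\omega$-word at each $\vclause{i}$, whose relevant prefixes cannot lie both in $a^{M_i}$ and in $b^+\setminus b^{M_i}$, so some realizable $k$ loses.
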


\begin{proof}
  The tree unfolding can be seen as a memory structure for a winning
  strategy. Indeed, consider the memory set defined by $\nodeint$,
  starting from memory state $\node_0$. Define the application
  $\upd : \nodeint \times V \to \nodeint$ by $\upd(\node,v') = \node'$
  such that $v'\in S$ whenever
  \begin{itemize}
  \item either $\node' \in \nodeint$ is a child of $\node$ such that
    $\ell_N(\node') = v'$
  \item or $\node' \in \nodeint$ is an ancestor of
    $\node'' \in \nodeleaf$ such that
    $\ell_N(\node'') = \ell_N(\node') = v'$, and $\node''$ is a child
    of $\node$.
  \end{itemize}
  We also define the application
  $\next : \nodeint \times V \to \Sigma^\omega$ by
  $\next(\node,v) = \stratree(\node)$.

  Then, it is easy to see that winning strategy $\stratsys$ can be
  defined using memory $\nodeint$ and applications $\upd$ and $\next$.

Furthermore, though the $\omega$-words extracted fom $\B$ can
    be of doubly-exponential size, their computation and the overall
    procedure only requires exponential space.

\medskip For the lower bound, we show the following lemma.

\begin{lemma}
  There is a family of games $(\game_n)_n$ such that the size of
  $\game_n$ is polynomial in $n$ but winning coalition strategies
  require exponential memory.
\end{lemma}

\begin{proof}
  We again consider the game of Fig.~\ref{fig:worstcase-example},
  whose description can be made in polynomial time (since the $i$-th
  prime number uses only $\log(i)$ bits in its binary representation). We
  have already seen that its tree unfolding has exponential size. 
  We will argue why exponential memory is required, that is, one
  cannot do better than the tree memory structure.

  First notice that there is a winning coalition strategy: play
  $a^\omega$ at every vertex $B_i$, and $a^\omega$ (resp. $b^\omega$)
  at vertex $\vclause{i}$ if the history went through $v_i$
  (resp. $\bar{v}_i$). This strategy can be implemented using the
  memory given by the tree unfolding.

  Assume one can do better and have a memory structure of size
  strictly smaller than $2^n$. Then, arriving in vertex $\vclause{1}$,
  there are at least two different histories leading to the same
  memory state, hence the coalition strategy will select exactly the
  same $\omega$-words in all vertices $\vclause{1}$, $\vclause{2}$,
  ..., $\vclause{n}$. We realize that it cannot be winning since the
  two histories disagree at least on a predicate ``be a multiple of
  the $i$-th prime number''. Contradiction.
\end{proof}

\end{proof}

\subsection{Illustration of the construction}
\label{subsec:illustration}

We illustrate the construction on one example.

\begin{figure}[h]
  \begin{minipage}{0.35\textwidth}
\begin{center}
  \begin{subfigure}[l]{\textwidth}
    \scalebox{0.9}{
      \begin{tikzpicture}[initial text =]
        \everymath{\scriptstyle}
        
        \node[state,minimum size=1mm,initial] at (0,0) (p0) {$p_0$};
        \node[state,minimum size=1mm,accepting] at (1.5,0) (p1) {$p_1$};
        \draw [-latex'] (p0) edge[ above] node{$b$} (p1);
        \draw [-latex'] (p1) edge[loop above] node{$a$} (p1);
        \draw [-latex'] (p0) edge[loop above] node{$a$} (p1);
      \end{tikzpicture}
    }
    \caption{Automaton for $a^*ba^*$.}
    \label{fig:aut2b}
  \end{subfigure} \\
  \begin{subfigure}[l]{\textwidth}
    \scalebox{0.9}{
      \begin{tikzpicture}[initial text =]
        \everymath{\scriptstyle}
        
        \node[state,minimum size=1mm,initial] at (0,0) (q0) {$q_0$};
        \node[state,minimum size=1mm,accepting] at (1.5,0) (q1) {$q_1$};
        \draw [-latex'] (q0) edge[ above] node{$a$} (q1);
      \end{tikzpicture}
    }
    \caption{Automaton for $a$.}
    \label{fig:aut1a}
  \end{subfigure} \\
  \begin{subfigure}[l]{\textwidth}
    \scalebox{0.9}{
      \begin{tikzpicture}[initial text =]
        \everymath{\scriptstyle}
        
        \node[state,minimum size=1mm,initial] at (0,0) (r0) {$r_0$};
        \node[state,minimum size=1mm,accepting] at (1.5,0) (r1) {$r_1$};
        \draw [-latex'] (r0) edge[ above] node{$\Sigma$} (r1);
        \draw [-latex'] (r1) edge[loop above] node{$\Sigma$} (r1);
      \end{tikzpicture}
    }
    \caption{Automaton for $\Sigma^+$.}
    \label{fig:aut2a}
  \end{subfigure} \\
  \begin{subfigure}[l]{\textwidth}
    \scalebox{0.9}{
      \begin{tikzpicture}[initial text =]
        \everymath{\scriptstyle}
        
        \node[state,minimum size=1mm,initial] at (0,0) (s0) {$s_0$};
        \node[state,minimum size=1mm] at (1,.7) (s1) {$s_1$};
        \node[state,minimum size=1mm, accepting] at (2,0) (s2) {$s_2$};
        \node[state,minimum size=1mm, accepting] at (1,-.7) (s3) {$s_3$};
        
        \draw [-latex'] (s0) edge[above] node[midway,sloped]{$a$} (s1); 
        \draw [-latex'] (s0) edge[below] node[midway,sloped]{$b$} (s3); 
        \draw [-latex'] (s1) edge[above] node[midway,sloped]{$a$} (s2); 
        \draw [-latex'] (s2) edge[loop below] node{$a$} (s2);
      \end{tikzpicture}
    }
    \caption{Automaton for $b \lor aa^+$.}
    \label{fig:aut3a}
  \end{subfigure}
\end{center}
\caption{Automata corresponding to the input languages of
  Fig.~\ref{fig:example-arena}.  The automata are not complete for
  sake of readability; all unspecified letters lead to a (sink)
  non-accepting state `$\times$'. \label{fig:automata_arena}}
\end{minipage}
\hfill
\begin{minipage}{0.6\textwidth}
{\centering
\scalebox{0.9}{
\begin{tikzpicture}[initial text=]
\node[inner sep=0pt,initial] at (0,0) (1) {
\begin{tabular}{c}
            $p_0$\\
            $p_0$\\
            $q_0$\\\hline 
            $s_0$\\\hline
            $r_0$\\\hline
            $s_0$\\
        \end{tabular}
        };
\node[inner sep=0pt] at (3,0) (2) {
\begin{tabular}{c}
            $p_0$\\
            $p_0$\\
            $q_1$\\\hline 
            $s_1$\\\hline
            $r_1$\\\hline
            $s_3$\\
        \end{tabular}
        };
\node[inner sep=0pt] at (6,0) (3) {
\begin{tabular}{c}
            $p_1$\\
            $p_1$\\
            $\times$\\\hline 
            $s_2$\\\hline
            $r_1$\\\hline
            $\times$\\
        \end{tabular}
        };
\node[inner sep=0pt] at (3,3) (4) {
\begin{tabular}{c}
            $p_1$\\
            $p_1$\\
            $\times$\\\hline 
            $s_1$\\\hline
            $r_1$\\\hline
            $s_3$\\
        \end{tabular}
        };
\node[inner sep=0pt] at (3,-3) (5) {
\begin{tabular}{c}
            $p_1$\\
            $p_1$\\
            $\times$\\\hline 
            $s_3$\\\hline
            $r_1$\\\hline
            $s_3$\\
        \end{tabular}
        };
\node[inner sep=0pt] at (6,-3) (6) {
\begin{tabular}{c}
            $p_1$\\
            $p_1$\\
            $\times$\\\hline 
            $\times$\\\hline
            $r_1$\\\hline
            $\times$\\
        \end{tabular}
        };
\node[inner sep=0pt] at (4,3) (dots1) {$\cdots$};
\node[inner sep=0pt] at (7,-3) (dots2) {$\cdots$};
        
\draw [rounded corners=.5em] (1.north west) rectangle (1.south east);
\draw [rounded corners=.5em,blue] (2.north west) rectangle (2.south east);
\draw [rounded corners=.5em,blue] (3.north west) rectangle (3.south east);
\draw [rounded corners=.5em] (4.north west) rectangle (4.south east);
\draw [rounded corners=.5em,blue] (5.north west) rectangle (5.south east);
\draw [rounded corners=.5em] (6.north west) rectangle (6.south east);

    \everymath{\scriptstyle}
\draw [-latex',blue] (3) edge[loop right] node{$
\begin{pmatrix}
           a \\
           a \\
           \Sigma \\
           \Sigma
         \end{pmatrix}$} (3);
\draw[-latex',blue] (1) to node[above] {$
\begin{pmatrix}
           a \\
           a \\
           \Sigma \\
           b 
         \end{pmatrix}$} (2);
\draw[-latex',blue] (2) to  node[above] {$
\begin{pmatrix}
           b \\
           a \\
           \Sigma \\
           \Sigma
         \end{pmatrix}$} (3);
\draw[-latex'] (1) to [bend left] node[above] {$
\begin{pmatrix}
           b \\
           a \\
           \Sigma \\
           b
         \end{pmatrix}$} (4);
\draw[-latex',blue] (1) to [bend right] node[below] {$
\begin{pmatrix}
           b \\
           b \\
           \Sigma \\
           b
         \end{pmatrix}$} (5);
\draw[-latex'] (5) to  node[above] {$
\begin{pmatrix}
           a \\
           \Sigma \\
           \Sigma \\
           \Sigma
         \end{pmatrix}$} (6);
\end{tikzpicture}
}}
\caption{Automaton $\B$ corresponding to the tree given in
  Fig.~\ref{fig:unfold2}.  Here we have only shown the accepting
  states (marked in blue) and some of the non-accepting states.
  Further explanations are given in Example~\ref{ex:B}.
  \label{fig:big_automaton}} 
\end{minipage}
\end{figure}

\begin{example}
  \label{ex:B}
  Fig.~\ref{fig:big_automaton} represents part of the automaton $\B$
  corresponding to the tree $\tree$ in Fig.~\ref{fig:unfold2} (that
  is, the tree unfolding of the arena in
  Fig.~\ref{fig:example-arena}). The automata $\B_e$ for the languages
  labeling the edges of $\tree$ are depicted in
  Fig.~\ref{fig:automata_arena}.  Here notice that each state of $\B$
  has as many components as the number of edges leading to a safe node
  in $\tree$, we did not consider the edges leading to $\bot$. This is
  without loss of any generality: the language on any `unsafe' edge
  leading to $\bot$, in this example, are disjoint from the languages
  on the edges leading to its \emph{siblings} (other children of its
  parent node).  The first three positions in a state of $\B$,
  presented as a single cell in the picture, correspond to the
  outgoing edges of the \emph{root} $\node_0$ of $\tree$ (hence they
  follow the same component in $\Sigma^m$), and the other positions
  correspond to the other edges (in some chosen order). `$\times$' in
  a component of a state denotes the non-accepting sink state of the
  corresponding automaton (as mentioned in
  Fig.~\ref{fig:automata_arena}). Finally, here we have only shown the
  accepting states (marked in blue) and some of
  the 
  non-accepting states.
Indeed
  one can verify that the states which are colored in blue satisfy the formula $\varphi$; for
  instance, the  state ($p_1,p_1,\times,s_2,r_1,\times$)
  on the right 
  corresponds to the two maximal paths $v_0 v_0$ and $v_0 v_1 v_0$ 
  in $\tree$ (notice we used the node labels here),
  and all of them lead to safe nodes.  The infinite execution in blue
  (\emph{i.e.}, all the words in
  $(a,a,\Sigma,b)\cdot (b,a,\Sigma,\Sigma)\cdot
  (a,a,\Sigma,\Sigma)^\omega$) corresponds to the winning coalition
  strategies in the tree: for instance,
  $\stratree(\node_0) = aba^\omega$; 
  $\stratree(\node_1) = a^\omega$; 
  for any $a\in \Sigma$,
  $\stratree(\node_2) = a^\omega$; and
  $\stratree(\node'_1) = b^\omega$ is a winning coalition strategy
  (note here, for instance, that at node $\node_2$, any word from
  $\Sigma^\omega$ could be played).
\end{example}

\subsection{\PSPACE lower bound}
\label{subsec:lower-bound}
We show the safety coalition problem is \PSPACE-hard by reduction from
\textsf{QBF-SAT}, which is known to be
\PSPACE-complete~\cite{StockmeyerM73}.  The construction is inspired
by the one in~\cite{BBM19}, where the first agent was playing against
the coalition of all other agents, with a reachability objective.

  \begin{proposition}
  \label{prop:PSP-hard}
  The safety coalition problem is \PSPACE-hard.
\end{proposition}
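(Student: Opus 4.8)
The plan is to reduce from \textsf{QBF-SAT}. Given a quantified Boolean formula $\Phi = Q_1 x_1 Q_2 x_2 \cdots Q_n x_n\, \psi$ where each $Q_i \in \{\exists, \forall\}$ and $\psi = \bigwedge_{j=1}^{c} C_j$ is in CNF, I would build in polynomial time a parameterized safety game $\game_\Phi$ that admits a winning coalition strategy iff $\Phi$ is true. The core idea, inspired by the construction in~\cite{BBM19}, is to encode a truth assignment of the variables into the \emph{parity of the number of agents} relative to a sequence of primes: following the style of Fig.~\ref{fig:worstcase-example}, I would use a block $B_i$ for each variable $x_i$ with two outgoing edges labeled $a^{M_i}$ and $a^+ \setminus a^{M_i}$, where $M_i$ is the set of multiples of the $i$-th prime $p_i$. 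Since the $p_i$ are pairwise coprime, by the Chinese Remainder Theorem the environment can realize, via the choice of $k$, any pattern of ``$k \in M_i$ or not'' across the $n$ variables, so the history through the variable blocks records a full truth assignment, with $x_i$ set according to whether the play went through the ``$a^{M_i}$'' branch.

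The key point is how the two quantifier types are distinguished by \emph{who controls} the resulting branch. For an existential variable $x_i$, the coalition should control the truth value: the coalition picks an $\omega$-word at $B_i$ whose $k$-length prefix either lies in $a^{M_i}$ or not, and since the coalition does not know $k$, I must design the gadget so that the coalition can commit to a \emph{uniform} choice (e.g. $a^\omega$, forcing $k \in M_i$ to coincide with the branch, hence choosing the bit independently of $k$). For a universal variable $x_i$, the branching must instead be resolved by the environment's choice of $k$, so the coalition cannot pin down the value; the gadget must let the environment independently toggle the $M_i$-membership bit while the coalition's action is irrelevant, which is exactly what coprimality of the primes grants. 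After the $n$ variable blocks, the play reaches a clause-checking phase analogous to the $\vclause{i}$ chain in Fig.~\ref{fig:worstcase-example}, where each clause $C_j$ is tested against the recorded assignment: a play is routed to the unsafe sink $\bot$ precisely when some clause is violated under the assignment encoded by the history, and stays safe (reaching $\top$, a safe self-looping vertex) otherwise.

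I would then argue correctness in both directions. If $\Phi$ is true, there is a Skolem/winning strategy for the existential player in the QBF game; I translate it into a coalition strategy that, at each existential block, plays the uniform word committing to the Skolem-chosen bit (as a function of the history of previously-resolved universal bits, which the coalition reads off the visited vertices), ensuring every clause is satisfied no matter how the environment resolves the universal bits via $k$; safety follows because $\bot$ is never reached. Conversely, a winning coalition strategy yields an existential winning strategy in the QBF game: the coalition's uniform choices at existential blocks define the existential assignments as a function of the universal ones, and the fact that the strategy is safe for \emph{every} $k$ means it must handle \emph{all} combinations of universal-bit settings the environment can force through coprime residues, so $\psi$ evaluates to true under every universal assignment, i.e. $\Phi$ holds.

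The main obstacle I anticipate is the faithful encoding of the quantifier \emph{alternation} together with the coalition's partial information. The subtlety is that the coalition never learns $k$ directly and only infers information from visited vertices; I must ensure that at an existential block the coalition genuinely gets to \emph{choose} the bit uniformly across all $k$ (so that its choice is a legitimate strategic commitment), while at a universal block the environment retains full freedom to set the bit via $k$ \emph{independently} of the bits already fixed and of the coalition's (irrelevant) action. Making these two behaviors coexist within a single sequence of blocks, using only coprimality of the primes and the completeness requirement on the arena, and verifying that no spurious routing to $\bot$ or escape from $\bot$-avoidance leaks information or breaks the correspondence, is the delicate part; the prime-based counting and the Chinese Remainder Theorem are the tools that make the independence of the universal bits rigorous, and I would keep the arena of size polynomial in $n$ and $c$ by noting each prime $p_i$ needs only $O(\log i)$ bits.
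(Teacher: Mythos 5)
Your overall plan --- a reduction from \textsf{QBF-SAT} using prime-multiple languages $a^{M_i}$ so that the number of agents $k$ encodes a truth assignment, with existential bits chosen by the coalition and universal bits by the environment, followed by a clause-checking chain --- is exactly the strategy of the paper. But there is a genuine gap at the heart of it, and it is precisely the point you flag as ``the delicate part'' and leave unresolved: the existential gadget. With the block you actually describe (two outgoing edges labeled $a^{M_i}$ and $a^+\setminus a^{M_i}$, all other words going to $\bot$), the coalition has no choice at all: completeness forces it to play a word in $a^\omega$ (anything else is unsafe), and then the branch taken is determined solely by whether $k\in M_i$. Playing $a^\omega$ does not ``force $k \in M_i$ to coincide with the branch''; the plays with $k\notin M_i$ simply go down the other branch. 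So every block of your construction behaves as a universal block, and the game you build is winning for the coalition iff the matrix is true under \emph{all} assignments --- not iff $\Phi$ is true.

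The missing idea, which the paper supplies, is to make the two truth values use \emph{different letters} and to route the incompatible numbers of agents to a \emph{safe} sink. At an existential vertex the paper sets $\Delta(v,\vvar{i}) = a^{M_i}$, $\Delta(v,\vbvar{i}) = b^+\setminus b^{M_i}$, and crucially $\Delta(v,\top) = (a^+\setminus a^{M_i}) \cup b^{M_i}$, where $\top$ is safe and absorbing. Now playing $a^\omega$ means: the plays with $k\in M_i$ continue with $x_i$ set to $\true$, and all other $k$ escape to $\top$, where the coalition has already won; symmetrically for $b^\omega$ and $\false$. This is what turns a uniform letter choice into a genuine choice of the bit: the coalition does not control $k$, but it controls \emph{which residue class survives}, and the discarded class is won for free rather than lost or mis-assigned. (The universal gadget then uses a single dedicated letter $c$ with branches $c^{M_i}$ and $c^+\setminus c^{M_i}$, so that the environment alone decides.) Without this asymmetric labeling plus safe escape, no amount of coprimality or Chinese-Remainder reasoning recovers the quantifier alternation, since CRT only tells you the environment can realize every bit pattern --- which is exactly the problem, not the solution, at existential positions. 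The clause-checking phase you sketch also needs the coalition to actively select a literal per clause (playing $a_i^\omega$, with edges labeled $a_i^{M_i}$ or $a_i^+\setminus a_i^{M_i}$ according to the literal's polarity), but that part is routine once the first phase correctly ties residues to the assignment.
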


\begin{proof}
  Let $\varphi = \exists x_1 \forall x_2 \exists x_3 \ldots \forall x_{2r}
  \cdot \bigl(C_1 \wedge C_2 \wedge \ldots \wedge C_m \bigr) $ be a
  quantified Boolean formula in prenex normal form, where for every
  $1 \le h \le m$, $C_h = \ell_{h,1} \vee \ell_{h,2} \vee \ell_{h,3}$,
  and for every $1 \leq j \leq 3$,
  $\ell_{h,j} \in \{x_i, \neg x_i \mid 1 \le i \le 2r\}$ are the
  literals.

  In the reduction, we use sets of natural numbers (that represent the
  number of agents) corresponding to multiples of primes. Let thus
  $p_i$ be the $i$-th prime number and $M_i$ the set of all non-zero
  natural numbers that are multiples of $p_i$. For simplicity, we
  write $a^{M_i}$ to denote the set of words in $(a^{p_i})^+$, that is
  words from $a^+$ whose length is a multiple of $p_i$. It is
  well-known that the $i$-th prime number requires $O(\log(i))$ bits
  in its binary representation, hence the description of each of the
  above languages is polynomial in the size of $\varphi$.
  
\smallskip
  From $\varphi$, we construct an arena
  $\mathcal{A}_\varphi = \tuple{V,\Sigma,\Delta}$ as follows:
  \begin{itemize}
  \item
    $V = \{v_0,v_1,\ldots, v_{2r-1}, v_{2r}\} \cup
    \{\vvar{1},\vbvar{1},\ldots, \vvar{2r},\vbvar{2r}\} \cup
    \{\vclause{1},\vclause{2},\ldots,\vclause{m},\vclause{m{+}1}\}
    \cup \{\bot,\top\}$, where we identify some vertices:
    $v_{2r} = \vclause{1}$, and $\vclause{m{+}1} =\top$.
\item $\Sigma = \{a,b,c\} \cup \bigcup_{1\le i \le 2r} \{a_{i}\}$.
\item For every $0\le s \le r{-}1$, every $1 \le i \le 2r$ and every
  $1 \le h \le m$: 
  \begin{enumerate}
 { \item $\Delta(v_{2s},\vvar{2s{+}1}) = a^{M_{2s+1}}$ and
    $\Delta(v_{2s},\vbvar{2s{+}1}) = b^+ \setminus b^{M_{2s+1}}$}
    
  {\item $\Delta(v_{2s},\top) =  (a^+ \setminus a^{M_{2s+1}}) \cup b^{M_{2s+1}}$}
    
  \item  $\Delta(v_{2s+1},\vvar{2s{+}2}) = c^{M_{2s+2}}$ and
$\Delta(v_{2s+1},\vbvar{2s{+}2}) = c^+ \setminus c^{M_{2s+2}}$

  \item $\Delta(\vvar{i},v_{i}) = \Sigma^+$ and $\Delta(\vbvar{i},v_{i}) = \Sigma^+$

 {\item $\Delta(\vclause{h},\vclause{h{+}1}) = \bigcup_{1\le j \le 3} L_{h,j}$
 where $L_{h,j} =  a_i^{M_i}$ if    $\ell_{h,j} = x_i$; 
    $L_{h,j} = a_i^+ \setminus a_i^{M_i}$ if    $\ell_{h,j} = \neg x_i$.}
    
  \end{enumerate}
  To obtain a complete arena, all unspecified transitions lead to 
  vertex $\bot$.
\end{itemize}
On the arena $\arena_\varphi$, we consider the safety coalition game
$\game_\varphi = (\arena_\varphi,S)$ with
$S = V \setminus \{\bot\}$. The construction is illustrated on a
simple example with $3$ variables and $2$ clauses in Fig.~\ref{fig:PSP-hard}.

\begin{figure}[h]
\begin{center}
  \begin{tikzpicture}
    \everymath{\scriptstyle}

    \draw [state,inner sep=1pt,minimum size=5mm] (0,0) node [draw,fill=green!50] (v0) {$v_0$}; 
    \draw [state,inner sep=1pt,minimum size=5mm] (2.5,0) node [draw,fill=green!50] (v1) {$v_1$}; 
    \draw [state,inner sep=1pt,minimum size=5mm] (5,0) node [draw,fill=green!50] (v2) {$v_2$}; 
    \draw [state,inner sep=1pt,minimum size=5mm] (7.5,0) node [draw,fill=green!50] (v3) {$\vclause{1}$}; 

    \draw [state,inner sep=1pt,minimum size=5mm] (10,0) node [draw,fill=green!50] (vC2) {$\vclause{2}$}; 
    \draw [state,inner sep=1pt,minimum size=5mm] (12.5,0) node [draw,fill=green!50] (vC3) {$\top$}; 
    
    \draw [state,inner sep=1pt,minimum size=5mm] (2.5,1.5) node [draw,fill=green!50] (vx1) {$\vvar{1}$}; 
    \draw [state,inner sep=1pt,minimum size=5mm] (2.5,-1.5) node [draw,fill=green!50] (vxbar1) {$\vbvar{1}$};
        \draw [state,inner sep=1pt,minimum size=5mm] (2.5,2.5) node [draw,fill=green!50] (t1t) {$\top$}; 
    \draw [state,inner sep=1pt,minimum size=5mm] (2.5,-2.5) node [draw,fill=green!50] (t1b) {$\top$}; 
    \draw [state,inner sep=1pt,minimum size=5mm] (5,1.5) node [draw,fill=green!50] (vx2) {$\vvar{2}$}; 
    \draw [state,inner sep=1pt,minimum size=5mm] (5,-1.5) node [draw,fill=green!50] (vxbar2) {$\vbvar{2}$}; 
    \draw [state,inner sep=1pt,minimum size=5mm] (7.5,1.5) node [draw,fill=green!50] (vx3) {$\vvar{3}$}; 
    \draw [state,inner sep=1pt,minimum size=5mm] (7.5,-1.5) node [draw,fill=green!50] (vxbar3) {$\vbvar{3}$}; 
        \draw [state,inner sep=1pt,minimum size=5mm] (7.5,2.5) node [draw,fill=green!50] (t3t) {$\top$}; 
    \draw [state,inner sep=1pt,minimum size=5mm] (7.5,-2.5) node [draw,fill=green!50] (t3b) {$\top$}; 

\path[every node/.style={anchor=south,auto=false},-latex']
(t1t) edge[loop right] node [right] {$\Sigma^{+}$} (t1t)
(t1b) edge[loop right] node [right] {$\Sigma^{+}$}(t1b)
(t3t) edge[loop right] node [right] {$\Sigma^{+}$} (t3t)
(t3b) edge[loop right] node [right] {$\Sigma^{+}$}(t3b)
(vC3) edge[loop right] node [right] {$\Sigma^{+}$}(vC3)
;
    
    \draw [-latex'] (v0) -- (vx1) node [midway,above,sloped]
    {$a^{M_1}$}; \draw [-latex'] (v0) -- (vxbar1)  node
    [midway,below,sloped] {$b^+ \setminus b^{M_1}$}; 
    \draw [-latex'] (v1) -- (vx2)  node [midway,above,sloped] {$c^{M_2}$}; 
    \draw [-latex'] (v1) -- (vxbar2)  node
    [midway,below,sloped] {$c^+ \setminus c^{M_2}$}; 
    \draw [-latex'] (v2) -- (vx3)  node [midway,above,sloped] {$a^{M_3}$};
     \draw [-latex'] (v2) -- (vxbar3)  node
    [midway,below,sloped] {$b^+ \setminus b^{M_3}$}; 

    \draw [-latex'] (vx1) -- (v1) node [midway,left] {$\Sigma^+$}; \draw [-latex'] (vxbar1) -- (v1)  node [midway,left] {$\Sigma^+$}; 
    \draw [-latex'] (vx2) -- (v2)  node [midway,left] {$\Sigma^+$}; \draw [-latex'] (vxbar2) -- (v2)   node [midway,left] {$\Sigma^+$}; 
    \draw [-latex'] (vx3) -- (v3)  node [midway,left] {$\Sigma^+$}; \draw [-latex'] (vxbar3) -- (v3)   node [midway,left] {$\Sigma^+$};

    \draw [-latex'] (v0) ..controls +(60:1cm) and +(180:1cm).. (t1t) node [left,pos=.8] {$a^+ \setminus a^{M_1}$}; 
    \draw [-latex'] (v0) ..controls +(300:1cm) and +(180:1cm).. (t1b) node [left,pos=.8] {$b^{M_1}$}; 
    \draw [-latex'] (v2) ..controls +(60:1cm) and +(180:1cm).. (t3t) node [left,pos=.8] {$a^+ \setminus a^{M_3}$};
    \draw [-latex'] (v2)  ..controls +(300:1cm) and +(180:1cm).. (t3b) node [left,pos=.8] {$b^{M_3}$};

    \draw [-latex] (v3) .. controls +(50:1cm) and +(135:1cm) .. (vC2) node [midway,above] {$a_1^{M_1}$}; 
    \draw [-latex] (v3) -- (vC2) node [midway,above] {$ a_2^+ \setminus a_2^{M_2}$}; 
    \draw [-latex] (v3) .. controls +(-50:1cm) and +(-135:1cm)
    .. (vC2) node [midway,above] {$a_3^+ \setminus a_3^{M_3}$};

    \draw [-latex] (vC2) .. controls +(50:1cm) and +(135:1cm) .. (vC3) node [midway,above] {$a_1^{M_1}$}; 
    \draw [-latex] (vC2) -- (vC3) node [midway,above] {$a_2^+ \setminus a_2^{M_2}$}; 
    \draw [-latex] (vC2) .. controls +(-50:1cm) and +(-135:1cm)
    .. (vC3) node [midway,above] {$a_3^{M_3}$}; 
  \end{tikzpicture}
\end{center}
\caption{Parameterized arena for the formula
  $\varphi = \exists x_1 \forall x_2 \exists x_3 \cdot (x_1
  \vee \neg x_2 \vee \neg x_3) \wedge (x_1 \vee \neg x_2 \vee x_3)$.\\
  All unspecified transitions lead to the sink losing vertex
  $\bot$. Set $M_i$ denotes multiples of the $i$-th prime
  number. Vertex $\vvar{i}$ (resp. $\vbvar{i}$) represents setting
  variable $x_i$ to $\true$ (resp. $\false$). For any play reaching
  $\vclause{1}$, for every $i$, the number of agents is in $M_i$ iff
  the play went through $\vvar{i}$.}
  \label{fig:predicates}
  \label{fig:PSP-hard}
\end{figure}
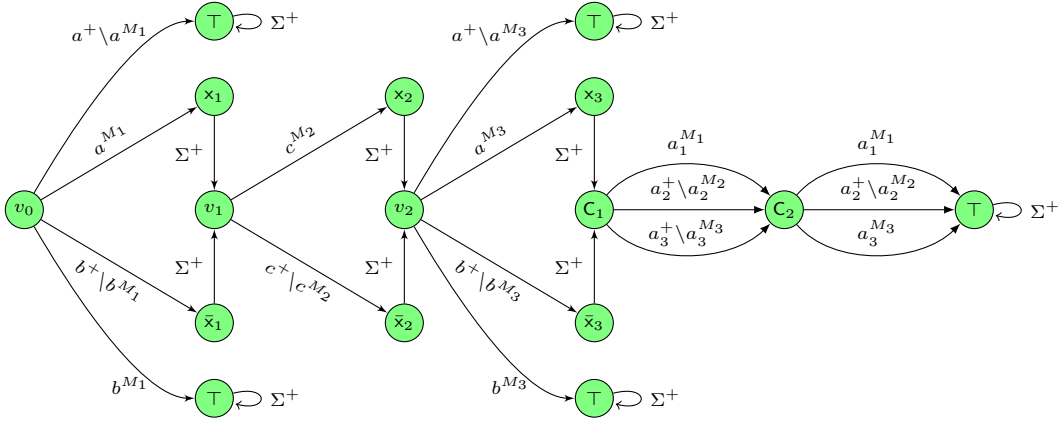

From $v_0$, a first phase up to $v_{2r} = \vclause{1}$ consists in
choosing a valuation for the variables. The coalition chooses the
truth values of existentially quantified variables $x_{2s+1}$ in
vertices $v_{2s}$: it plays $a^\omega$ for $\true$, and $b^\omega$ for
$\false$. In the first (resp. second) case, if the number of agents
involved in the coalition is (resp. is not) a multiple of $p_{2s+1}$,
then the game proceeds to the next variable choice, otherwise the safe
$\top$ state is reached (forever).

For universally quantified variables the coalition must play $c^\omega$ in
vertices $v_{2s{+}1}$, as any other choice would immediately lead to
the sink losing vertex $\bot$; the choice of the assignment then only
depends on whether the number of agents involved in the coalition is a
multiple of $p_{2s+2}$ (in which case variable $x_{2s+2}$ is assigned
$\true$) or not (in which case variable $x_{2s+2}$ is assigned
$\false$).

Hence, depending on the number of agents involved in the coalition,
either the play will proceed to state $v_{2r} = \vclause{1}$, in which
case the number of agents characterizes the valuation of the variables
(it is a multiple of $p_i$ if and only if variable $x_i$ is set to
$\true$); or it will have escaped to the safe state $\top$.

Note that in terms of information, the coalition learns progressively
assignments (thanks to the visit to either vertex $\vvar{i}$ or vertex
$\vbvar{i}$). Note also that the coalition can never learn assignments
of next variables in advance (it can only know whether it is a
multiple of previously seen prime numbers, hence of previously
quantified variables, not of variables quantified afterwards).

  \smallskip
From $\vclause{1}$, a second phase starts where one checks whether the
generated valuation makes all clauses in $\varphi$ true. If it is the
case, sequentially, the coalition chooses for every clause a literal
that makes the clause true. The arena forces these choices to be
consistent with the valuation generated in the first phase. For instance,
on the example of Fig.~\ref{fig:PSP-hard}, to set $x_1$ to $\true$
in the first phase, the coalition must play $a^\omega$, and only plays
with a number of agents in $M_1$ do not move to $\top$ and continue the
first phase from $\vvar{1}$. Then, in the second phase, for instance
for the first clause, one can choose literal $\ell_{1,1} = x_1$ by
playing $a_1^\omega$. The same language --${a_1}^{M_1}$-- labels the edge
from $\vclause{1}$ to $\vclause{2}$, so that the play proceeds to
$\vclause{2}$.
More generally, if $a_i^\omega$ leads from $\vclause{h}$ to
$\vclause{h{+}1}$ with number of agents in $M_i$, this means that
$\vvar{i}$ was visited, hence indeed $x_i$ was set to $\true$. On the
contrary, if $a_i^\omega$ leads from $\vclause{h}$ to
$\vclause{h{+}1}$ with number of agents not in $M_i$, this means that
$\vbvar{i}$ was visited, hence indeed $x_i$ was set to $\false$.

\medskip


To complete the proof sketch given above, we now formally
 prove that the reduction indeed ensures the following equivalence:
coalition has a winning strategy in the parameterized safety game
$\game_\varphi = (\arena_\varphi,S)$ if and only if
$\varphi$ is true.

\medskip For $\ell_{h,j}$ a literal of $\varphi$, we write
$\gamma(\ell_{h,j})$ for the constraint associated with $\ell_{h,j}$
in the second phase of the reduction. More precisely,

    \[
  \gamma(\ell_{h,j}) = \begin{cases}
     M_i & \textrm{if } \ell_{h,j}=x_i\\
     \neg M_i & \textrm{if } \ell_{h,j} = \neg x_i
  \end{cases}
\]

We rely on the following correspondence between histories from $v_0$
to $\vclause{1}$ and valuations over $\{x_1, \cdots, x_{2r}\}$: the
valuation $\iota_\pi$ associated with a history
$\pi = v_0 v'_1 v_1 v'_2 v_2 \cdots \vclause{1}$ is such that
$\iota_\pi(x_i) = 1$ if $v'_i = \vvar{i}$ and $\iota_\pi(x_i) = 0$ if
$v'_i = \vbvar{i}$. This correspondence defines a bijection, so that,
given a valuation $\iota$ over $\{x_1, \cdots, x_{2r}\}$, there is a
unique history $\pi_\iota = v_0 v'_1 v_1 v'_2 v_2 \cdots \vclause{1}$ with
$v'_i =\vvar{i}$ if $\iota(x_i) = 1$, and $v'_i = \vbvar{i}$

Moreover, after a finite history
$\pi = v_0 v'_1 v_1 v'_2 v_2 \cdots \vclause{1}$, the coalition can
deduce the number of agents is in the set
$K_\pi = \bigcap_{i\le2r}P_i$ where for each $i$, $P_i = M_i$ if
$v_i' = \vvar{i}$ and $P_i = \neg M_i$ if $v_i' = \vbvar{i}$.

There is thus also a bijection between valuations and possible
knowledge about the number of agents at $\vclause{1}$, so that we
abusively write $K_\iota$ when $\iota$ is a fixed valuation over
$\{x_1, \cdots, x_{2r}\}$. Note that for every $1 \le i \le 2r$,
$K_\iota \cap \{M_i,\neg M_i\}$ is either a subset of $M_i$ (when
$\iota(x_i)=1$) or a subset of $\neg M_i$ (when $\iota(x_i)=0$).
Also an extension of a history $\pi = v_0 v'_1 v_1 v'_2 v_2 \cdots \vclause{1}$
either leads to $\bot$, or continues in the main part of the game, in
which case it does not refine the knowledge set $K_\iota$ further.

\begin{lemma}
  \label{lem:qbfwin}
  Let $\iota$ be a valuation of $\{x_1, \cdots, x_{2r}\}$. Then
  $\iota \models C_1 \wedge \dots \wedge C_m$ if and only if coaltion has a
 winning strategy in $\game_{\varphi}$ starting from $\vclause{1}$,
if the number of agents $k$ is in $K_\iota$ .
\end{lemma}

\begin{proof}
  Assume $\iota \models C_1 \wedge \dots \wedge C_m$.  For every
  $1 \le h \le m$, there is $\ell_{h,j}$ (literal of $C_h$) such that
  $\iota(\ell_{h,j})=1$. We define the following strategy $\stratsys$
  for coalition: from vertex $C_h$, the strategy plays the
  word $\alpha_h^\omega$ defined by $\alpha_h = a_i$ if
  $\ell_{h,j} = x_i \text{ or } \neg{x_i}$.  By property on $K_\iota$,
  $K_\iota \subseteq \gamma(\ell_{h,j})$. Thus, $\sigma$ avoids
  $\bot$, and is therefore winning for every $k \in K_\iota$.

  Conversely, pick a winning strategy $\stratsys$ for coalition from
  $\vclause{1}$ against $K_\iota$. This strategy plays uniformly for
  all $k \in K_\iota$, say coalition chooses ${\alpha_h}^\omega$ from
  $\vclause{h}$. If for some $1\le i \le 2r$, $\alpha_h = a_i$, then
  either $x_i$ or $\neg x_i$ is a literal of $C_h$.  If
  $\Delta(\vclause{h},\vclause{h+1})$ contains ${a_i}^{M_i}$, then
  $x_i$ is the literal in $C_h$ and
  $K_\iota \subseteq \gamma(x_i)$; the construction guarantees that
  $\iota(x_i)=1$. A similar reasoning applies to the case where
  $\Delta(\vclause{h},\vclause{h+1})$ contains
  $a_i^+ \setminus {a_i}^{M_i}$ to show $\iota(x_i)=0$.  In all cases,
  $C_h$ is satisfied by $\iota$.
  \end{proof}

  Lemma~\ref{lem:qbfwin} formalizes the link between winning
  strategies for coalition in the second phase (that is, from
  $\vclause{1}$) and satisfying valuations. It remains to relate
  strategies for coaltion in the first phase and decision functions
  for the variable quantifications in $\varphi$.

  The two-player game underlying the quantifications of $\varphi$
  coincides with the coalition game played in the first phase. In
  particular, strategies coincide in the two games. Fix a coalition
  strategy $\stratsys$ in the quantification game or equivalently in
  the first phase. For every valuation $\iota$ which is generated by
  $\stratsys$ in the quantification game, there is an outcome $\pi$
  ending in $\vclause{1}$ such that $\iota = \iota_\pi$. Conversely,
  for every outcome $\pi$ ending in $\vclause{1}$, $\iota_\pi$ is
  generated by $\stratsys$ in the quantification game.
  
\end{proof}


\section{Future work}
\label{sec:conclusion}

\begin{figure}[h]
\centering
\vspace*{-.5cm}
\begin{tikzpicture}[shorten >=1pt,node distance=12mm and 3cm,on grid,auto,semithick]
    \everymath{\scriptstyle}
\node[state,inner sep=1pt,minimum size=5mm] (v_0) {$v_0$};
\node[state,inner sep=1pt,minimum size=5mm,double] (v_1) [right =2cm of v_0] {$v_1$};
\node[state,inner sep=1pt,minimum size=5mm] (v_2) [left = of v_0] {$v_2$};

 \path[every node/.style={sloped,anchor=south,auto=false},-latex']
(v_0) edge node [above] {$a^* b$} (v_1)
 (v_0) edge node [above] {$\Sigma^+ \setminus (a^* b + a^*ba^+)$} (v_2)
 (v_0) edge[loop above,-latex'] node [above] {$a^*b a^+$} (v_0)
  (v_1) edge[loop above,-latex'] node [above] {$\Sigma^+$} (v_1)
   (v_2) edge[loop above,-latex'] node [above] {$\Sigma^+$} (v_2);
\end{tikzpicture}
\caption{Example of a reachability coalition game: a winning coalition
  strategy is that Agent~$n$ plays $a$ for the first $n{-}1$ rounds,
  then $b$ for one round, and finally $a$ forever.}
\label{fig:example-arena-conclusion}
\end{figure}
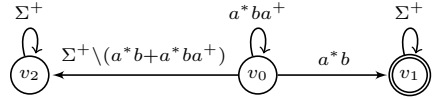

In this paper, we focused on and obtained results for the coalition
problem for safety objectives. The problem can obviously be defined
for other objectives. The finite tree unfolding technique will not be
correct in a general setting.
We illustrate this on the game arena in
Fig.~\ref{fig:example-arena-conclusion}.
In this example, the goal is to collectively reach the target
$v_1$. One can do so if, at $v_0$, \emph{the last} agent involved
plays a $b$ whereas all the others play $a$. On the other hand, at
$v_0$, it is safe if exactly one agent plays a $b$ and the other plays
an $a$. Coalition has a winning strategy: Agent~$n$ plays action $a$
for the first $n{-}1$ rounds, then plays $b$, and finally plays $a$
for the remaining steps. Doing so, each agent will in turn play action
$b$, and when the last agent does so, the play will reach
$v_1$. Notice that, for each agent (and hence for the coalition), the
strategy when going through $v_0$ differs at some step (at every step
from the coalition point-of-view), so no finite tree unfolding will be
correct.

As future work, we obviously would like to match lower and upper
bounds for the safety coalition problem, but more importantly we would
like to investigate more general objectives.


\bibliography{biblio}

\begin{thebibliography}{10}

\bibitem{AlfaroHK98}
Luca~de Alfaro, Thomas~A. Henzinger, and Orna Kupferman.
\newblock Concurrent reachability games.
\newblock In {\em Proceedings of the 39th Annual Symposium on Foundations of
  Computer Science ({FOCS}'98)}, pages 564--575. {IEEE} Computer Society Press,
  1998.
\newblock \href {http://dx.doi.org/10.1109/SFCS.1998.743507}
  {\path{doi:10.1109/SFCS.1998.743507}}.

\bibitem{AHK02}
Rajeev Alur, {\relax Th}omas~A. Henzinger, and Orna Kupferman.
\newblock Alternating-time temporal logic.
\newblock {\em Journal of the ACM}, 49:672--713, 2002.
\newblock \href {http://dx.doi.org/10.1145/585265.585270}
  {\path{doi:10.1145/585265.585270}}.

\bibitem{AK-ipl86}
Krzysztof Apt and Dexter~C. Kozen.
\newblock Limits for automatic verification of finite-state concurrent systems.
\newblock {\em Information Processing Letters}, 22(6):307--309, May 1986.
\newblock \href {http://dx.doi.org/10.1016/0020-0190(86)90071-2}
  {\path{doi:10.1016/0020-0190(86)90071-2}}.

\bibitem{BBM19}
Nathalie Bertrand, Patricia Bouyer, and Anirban Majumdar.
\newblock Concurrent parameterized games.
\newblock In {\em Proceedings o the 39th {IARCS} Annual Conference on
  Foundations of Software Technology and Theoretical Computer Science
  (FSTTCS'19)}, volume 150 of {\em LIPIcs}, pages 31:1--31:15. Schloss Dagstuhl
  - Leibniz-Zentrum f{\"{u}}r Informatik, 2019.
\newblock \href {http://dx.doi.org/10.4230/LIPIcs.FSTTCS.2019.31}
  {\path{doi:10.4230/LIPIcs.FSTTCS.2019.31}}.

\bibitem{BDGGG-lmcs19}
Nathalie Bertrand, Miheer Dewaskar, Blaise Genest, Hugo Gimbert, and
  Adwait~Amit Godbole.
\newblock Controlling a population.
\newblock {\em Logical Methods in Computer Science}, 15(3), 2019.
\newblock \href {http://dx.doi.org/10.23638/LMCS-15(3:6)2019}
  {\path{doi:10.23638/LMCS-15(3:6)2019}}.

\bibitem{BFS14}
Nathalie Bertrand, Paulin Fournier, and Arnaud Sangnier.
\newblock Playing with probabilities in reconfigurable broadcast networks.
\newblock In {\em {P}roceedings of the 17th {I}nternational {C}onference on
  {F}oundations of {S}oftware {S}cience and {C}omputation {S}tructure
  ({FoSSaCS}'14)}, volume 8412 of {\em Lecture Notes in Computer Science},
  pages 134--148. Springer, 2014.
\newblock \href {http://dx.doi.org/10.1007/978-3-642-54830-7_9}
  {\path{doi:10.1007/978-3-642-54830-7_9}}.

\bibitem{BKP11}
Dietmar Berwanger, Lukasz Kaiser, and Bernd Puchala.
\newblock A perfect-information construction for coordination in games.
\newblock In {\em Proceedings of the 31th {IARCS} Annual Conference on
  Foundations of Software Technology and Theoretical Computer Science
  (FSTTCS'11)}, volume~13 of {\em LIPIcs}, pages 387--398. Schloss Dagstuhl -
  Leibniz-Zentrum f{\"{u}}r Informatik, 2011.
\newblock \href {http://dx.doi.org/10.4230/LIPIcs.FSTTCS.2011.387}
  {\path{doi:10.4230/LIPIcs.FSTTCS.2011.387}}.

\bibitem{Bloem-book}
Roderick Bloem, Swen Jacobs, Ayrat Khalimov, Igor Konnov, Sasha Rubin, Helmut
  Veith, and Josef Widder.
\newblock {\em Decidability of Parameterized Verification}.
\newblock Synthesis Lectures on Distributed Computing Theory. Morgan {\&}
  Claypool Publishers, 2015.
\newblock \href {http://dx.doi.org/10.2200/S00658ED1V01Y201508DCT013}
  {\path{doi:10.2200/S00658ED1V01Y201508DCT013}}.

\bibitem{BBMU15}
Patricia Bouyer, Romain Brenguier, Nicolas Markey, and Michael Ummels.
\newblock Pure {N}ash equilibria in concurrent games.
\newblock {\em Logical Methods in Computer Science}, 11(2:9), 2015.
\newblock \href {http://dx.doi.org/10.2168/LMCS-11(2:9)2015}
  {\path{doi:10.2168/LMCS-11(2:9)2015}}.

\bibitem{CQSS-ijcai19}
Tristan Charrier, Arthur Queffelec, Ocan Sankur, and Fran{\c{c}}ois
  Schwarzentruber.
\newblock Reachability and coverage planning for connected agents.
\newblock In {\em Proceedings of the 28th International Joint Conference on
  Artificial Intelligence ({IJCAI}'19)}, pages 144--150. ijcai.org, 2019.
\newblock \href {http://dx.doi.org/10.24963/ijcai.2019/21}
  {\path{doi:10.24963/ijcai.2019/21}}.

\bibitem{CFO-fossacs20}
Thomas Colcombet, Nathana{\"{e}}l Fijalkow, and Pierre Ohlmann.
\newblock Controlling a random population.
\newblock In {\em Proceedings of the 23rd International Conference on
  Foundations of Software Science and Computation Structures ({FOSSACS}'20)},
  volume 12077 of {\em Lecture Notes in Computer Science}, pages 119--135.
  Springer, 2020.
\newblock \href {http://dx.doi.org/10.1007/978-3-030-45231-5\_7}
  {\path{doi:10.1007/978-3-030-45231-5\_7}}.

\bibitem{Del03}
Giorgio Delzanno.
\newblock Constraint-based verification of parameterized cache coherence
  protocols.
\newblock {\em Formal Methods in System Design}, 23(3):257--301, 2003.
\newblock \href {http://dx.doi.org/10.1023/A:1026276129010}
  {\path{doi:10.1023/A:1026276129010}}.

\bibitem{EK00}
E.~Allen Emerson and Vineet Kahlon.
\newblock Reducing model checking of the many to the few.
\newblock In {\em Proceedings of the 17th International Conference on Automated
  Deduction ({CADE}'00)}, volume 1831 of {\em Lecture Notes in Computer
  Science}, pages 236--254. Springer, 2000.
\newblock \href {http://dx.doi.org/10.1007/10721959_19}
  {\path{doi:10.1007/10721959_19}}.

\bibitem{Esp14}
Javier Esparza.
\newblock Keeping a crowd safe: On the complexity of parameterized verification
  (invited talk).
\newblock In {\em Proceedings of the 31st International Symposium on
  Theoretical Aspects of Computer Science {(STACS}'14)}, volume~25 of {\em
  LIPIcs}, pages 1--10. Schloss Dagstuhl - Leibniz-Zentrum fuer Informatik,
  2014.
\newblock \href {http://dx.doi.org/10.4230/LIPIcs.STACS.2014.1}
  {\path{doi:10.4230/LIPIcs.STACS.2014.1}}.

\bibitem{FKL10}
Dana Fisman, Orna Kupferman, and Yoad Lustig.
\newblock Rational synthesis.
\newblock In {\em Proceedings of the 16th International Conference on Tools and
  Algorithms for the Construction and Analysis of Systems (TACAS'10)}, volume
  6015 of {\em Lecture Notes in Computer Science}, pages 190--201. Springer,
  2010.
\newblock \href {http://dx.doi.org/10.1007/978-3-642-12002-2_16}
  {\path{doi:10.1007/978-3-642-12002-2_16}}.

\bibitem{KVW15}
Igor Konnov, Helmut Veith, and Josef Widder.
\newblock What you always wanted to know about model checking of fault-tolerant
  distributed algorithms.
\newblock In {\em Proceedings of the 10th International Andrei Ershov
  Informatics Conference ({PSI}'15)}, volume 9609 of {\em Lecture Notes in
  Computer Science}, pages 6--21. Springer, 2015.
\newblock \href {http://dx.doi.org/10.1007/978-3-319-41579-6_2}
  {\path{doi:10.1007/978-3-319-41579-6_2}}.

\bibitem{MST-corr19}
Corto Mascle, Mahsa Shirmohammadi, and Patrick Totzke.
\newblock Controlling a random population is exptime-hard.
\newblock \url{http://arxiv.org/abs/1909.06420}, 2019.

\bibitem{MW03}
Swarup Mohalik and Igor Walukiewicz.
\newblock Distributed games.
\newblock In {\em Proceedings of the 23rd Conference on Foundations of Software
  Technology and Theoretical Computer Science (FSTTCS'03)}, volume 2914 of {\em
  Lecture Notes in Computer Science}, pages 338--351. Springer, 2003.
\newblock \href {http://dx.doi.org/10.1007/978-3-540-24597-1\_29}
  {\path{doi:10.1007/978-3-540-24597-1\_29}}.

\bibitem{PR79}
Gary~L. Peterson and John~H. Reif.
\newblock Multiple-person alternation.
\newblock In {\em Proceedings of the 20th Annual Symposium on Foundations of
  Computer Science (FOCS'79)}, pages 348--363. IEEE Computer Society Press,
  1979.
\newblock \href {http://dx.doi.org/10.1109/SFCS.1979.25}
  {\path{doi:10.1109/SFCS.1979.25}}.

\bibitem{PR90}
Amir Pnueli and Roni Rosner.
\newblock Distributed reactive systems are hard to synthesize.
\newblock In {\em Proceedings of the 31st Annual Symposium on Foundations of
  Computer Science (FOCS'90)}, volume~2, pages 746--757. IEEE Computer Society
  Press, 1990.
\newblock \href {http://dx.doi.org/10.1109/FSCS.1990.89597}
  {\path{doi:10.1109/FSCS.1990.89597}}.

\bibitem{StockmeyerM73}
Larry~J. Stockmeyer and {Albert R.} Meyer.
\newblock Word problems requiring exponential time (preliminary report).
\newblock In {\em Proceedings of the 5th Annual ACM Symposium on Theory of
  Computing ({STOC}'73)}, pages 1--9. ACM, 1973.
\newblock \href {http://dx.doi.org/10.1145/800125.804029}
  {\path{doi:10.1145/800125.804029}}.

\bibitem{UW11a}
Michael Ummels and Dominik Wojtczak.
\newblock The complexity of {N}ash equilibria in limit-average games.
\newblock In {\em Proceedings of the 22nd International Conference on
  Concurrency Theory (CONCUR'11)}, volume 6901 of {\em Lecture Notes in
  Computer Science}, pages 482--496. Springer, 2011.
\newblock \href {http://dx.doi.org/10.1007/978-3-642-23217-6_32}
  {\path{doi:10.1007/978-3-642-23217-6_32}}.

\end{thebibliography}

\end{document}